\begin{document}

\title{Max-Min Beamforming for Large-Scale Cell-Free Massive MIMO: A Randomized ADMM Algorithm}
\author{Bin Wang,
Jun Fang,~\IEEEmembership{Senior Member,~IEEE},
Yue Xiao,~\IEEEmembership{Member,~IEEE},
Martin Haardt,~\IEEEmembership{Fellow,~IEEE}
\thanks{Bin Wang, Jun Fang and Yue Xiao are with the National Key Laboratory
of Wireless Communications, University of
Electronic Science and Technology of China, Chengdu 611731, China,
Email: JunFang@uestc.edu.cn; Bin Wang is also with the Guangdong Dejiu New Energy Co., Ltd, Foshan 528200, China.}
\thanks{Martin Haardt is with the Communications Research
Laboratory, Ilmenau University of Technology, 98693,
Ilmenau, German, Email: martin.haardt@tu-ilmenau.de}
}

\maketitle

\begin{abstract}
We consider the problem of max-min beamforming (MMB) for cell-free
massive multi-input multi-output (MIMO) systems, where the objective
is to maximize the minimum achievable rate among all users. Existing
MMB methods are mainly based on deterministic optimization methods,
which are computationally inefficient when the problem size grows
large. To address this issue, we, in this paper, propose a
randomized alternating direction method of multiplier (ADMM)
algorithm for large-scale MMB problems. We first propose a novel
formulation that transforms the highly challenging
feasibility-checking problem into a linearly constrained
optimization problem. An efficient randomized ADMM is then
developed for solving the linearly constrained problem. Unlike
standard ADMM, randomized ADMM only needs to solve a small number
of subproblems at each iteration to ensure convergence, thus
achieving a substantial complexity reduction. Our theoretical
analysis reveals that the proposed algorithm exhibits an
$O(1/\bar{t})$ convergence rate ($\bar{t}$ represents the number
of iterations), which is on the same order as its deterministic
counterpart. Numerical results show that the proposed algorithm
offers a significant complexity advantage over existing methods in
solving the MMB problem.
\end{abstract}

\begin{keywords}
Massive MIMO, max-min beamforming, feasibility-checking,
randomized ADMM.
\end{keywords}

\section{Introduction}
The last decade has witnessed a pivotal advance in mobile devices,
such as smartphones and notebooks. The proliferation of mobile
devices has triggered an unprecedented amount of mobile data
traffic. To satisfy the stringent need for high-speed
wireless data transmission, cell-free (CF) massive MIMO (multi-input
multi-output) is envisioned to play a central role in future
wireless systems \cite{ZhangBjornson20,YouWang19}. A CF
massive MIMO network consists of a number of Access Points (APs)
that are densely distributed over a wide area. All APs in the
network share the same time-frequency resources to provide
services for multiple users. A CPU (Central Processing Unit)
is connected to all APs to coordinate data transmission.

In a CF massive MIMO system, APs are usually densely located
in order to provide a high data transmission rate for users.
This, however, causes detrimental interference and presents
a major obstacle to maximizing the network capacity. An
essential way to suppress interference is to design highly
directional beams for different users. As such, developing
efficient beamforming techniques is vital for improving the
system performance. A popular method for beamforming is to
maximize the sum of the achievable rates of the users, which
leads to the sum-rate maximization problem. However, sum-rate
maximization usually ignores user fairness. Indeed, it is
often the case that users with high-quality channel conditions
receive a better quality-of-service (QoS) compared to those
with low-quality channel conditions. This is unacceptable in
many scenarios. An example is multicasting, where the common
information rate is determined by the worst-case data transmission
rate among users. The importance of fairness is also evident
in industrial wireless networks. In the Yangshan port of China,
for example, more than $100$ AGVs (automatic guided vehicles)
are controlled through a $5.8$GHz LTE (long term evolution)
network. In order to ensure the safe remote control of AGVs,
it is required that the transmission delay of the entire
network should not exceed $50$ms. In such a network, every
AGV should be provided with a sufficiently high QoS to
maintain the normal operation of the port. A common way
to enforce fairness is to employ either max-min
beamforming (MMB) or QoS-aware beamforming. The max-min
beamforming aims to maximize the minimum user achievable
rate while the QoS-aware beamforming sets a QoS threshold
that must be met for each user. Both of them aim to provide
uniformly good QoS for all users. These two beamforming
strategies have been studied in a variety of works, e.g.
\cite{XieXu20,MengHu23,NgoAshikhmin17,SadeghiBjornson17,
WuZhang19a,WuZhang19b,ChenTao17}. In this paper, we primarily
focus on the MMB problem. Nevertheless, we will show that
the proposed algorithm can also be extended to the QoS-aware
beamforming problem.

Due to the max-min form and the log-fractional structure
of the objective function, solving the MMB problem is
challenging. Existing methods for the MMB
problem can be roughly divided into two categories: dual
methods \cite{SchubertBoche04,HuangTan13,GongJordan09,YuLan07,
BjornsonJorswieck13,MirettiCavalcante24,MirettiCavalcante23,Nuzman07}
and primal methods \cite{XieXu20,ChakrabortyDemir20,ShenYu18,
NgoAshikhmin17,ShiZhang15,WangWang21,CaiWang23}. For a
centralized network with a total power constraint, dual
methods can employ the uplink-downlink duality (UDD) principle
to transform the highly challenging downlink beamforming
problem into an equivalent uplink combining problem, which
is much easier to solve since it has a favorable separable
structure that allows optimization variables decoupled
and updated in an alternating fashion. Based on the UDD
principle, the work \cite{YuLan07,BjornsonJorswieck13} has
derived the uplink dual problem for general MIMO networks
with different kinds of power constraints. The more recent
work \cite{MirettiCavalcante24}, which shares a similar
UDD principle with that of \cite{YuLan07,BjornsonJorswieck13},
extended the UDD-based method to more involved setups such
as constrained backhaul capacities as well as precoding
with only local CSI (channel state information). Nevertheless,
for the cell-free massive MIMO systems where each AP has
its own transmit power constraint, the uplink dual problem
becomes a highly coupled semi-definite program (SDP)
\cite{YuLan07}, which also calls for advanced optimization
methods. In addition to dual methods, primal methods
target directly on the original downlink MMB problem by
employing optimization tools such as semidefinite relaxation
(SDR) \cite{XieXu20}, fractional transformation
\cite{ChakrabortyDemir20,ShenYu18}, and bisection search
\cite{NgoAshikhmin17,XieXu20,ShiZhang15}. Among them,
SDR and fractional transformation are used to simplify
the log-fractional achievable rate function such that
the resulting problem can be easily solved using classical
tools such as interior-point methods
\cite{NesterovNemirovskii94} or ADMM (alternating direction
method of multipliers). However, these methods either
lift the problem to a much higher dimensional space or
necessitate alternating optimization between variables,
both of which are computationally expensive. Moreover,
these methods usually lack a global optimality guarantee.
The bisection search methods (BSM) can alleviate the above
drawbacks to some extent. The basic idea of BSM is to
transform the MMB problem into a sequence of convex
conic feasibility-checking subproblems (CCFCPs). In
this regard, efficiently solving the CCFCP is the key
to the success of BSM. In recent years, many
efforts have been devoted to investigating the CCFCP
(or the more general feasibility-checking problems)
\cite{BauschkeCombettes04,BauschkeMoursi17,
CensorElfving05,OdonoghueChu16,YeTodd94,XuHuang96}. In
particular, the Douglas-Rachford splitting (DRS) method
\cite{BauschkeCombettes04,BauschkeMoursi17} was employed
in \cite{WangWang21,CaiWang23} to address the MMB problem
in cell-free or collocated MIMO systems. Nevertheless,
these first-order methods still face challenges for
large-scale massive MIMO with an ever-increasing problem size.

In this paper, we propose a randomized ADMM algorithm to
efficiently solve the MMB problem in the cell-free massive
MIMO network. Our contribution mainly lies in two aspects.
Firstly, we propose a novel formulation that transforms the
highly challenging large-scale feasibility-checking problem
into a linearly constrained optimization problem. The objective
function of the linearly constrained problem is convex and
smooth. The second contribution is that we propose a highly
efficient randomized ADMM to solve the linearly constrained
optimization problem. By utilizing the orthogonal structure
inherent in the linear constraint, the most computationally
expensive subproblem is divided into a number of independent
subtasks. In each iteration, only a small number of subtasks
needs to be solved, thus leading to a much lower per-iteration
computational complexity compared to existing methods. In
the theoretical aspect, we prove that the proposed algorithm
exhibits an $O(1/\bar{t})$ convergence rate ($\bar{t}$
represents the number of iterations), which is in the same
order as other state-of-the-art deterministic methods.
Numerical results show that the proposed algorithm presents
a significant complexity advantage over existing methods in
solving the MMB problem.

The rest of this paper is organized as follows. In Section
\ref{sec-problem}, we introduce the communication model and
formulate the max-min beamforming (MMB) problem. In Section
\ref{sec-linear}, the feasibility-checking problem is formulated
into a linearly constrained problem. A standard ADMM and a randomized ADMM are developed
in Section \ref{sec-standard-ADMM} and Section \ref{sec-proposed}, respectively to solve
the linearly constrained problem.
The convergence result of the proposed algorithm is provided in
Section \ref{sec-convergence}. Section \ref{sec-extension-QoS} discusses
how to extend our method to QoS-aware beamforing problems. Section \ref{sec-relation}
discusses the relation between the proposed method and
existing methods. Simulations results are provided in
Section \ref{sec-simulation}, followed by concluding remarks
in Section \ref{sec-conclusion}.

\section{Problem formulation}
\label{sec-problem}

\subsection{Communication Model}
\label{subsec-comm-net} Consider a CF massive MIMO system with $M$
APs (access points) and $K$ single-antenna users. We assume that
there exists a CPU (central processing unit) which is responsible
for coordinating the data transmission for all users. The CPU is
connected to each AP through a backhaul link. For simplicity, we
assume that each AP is equipped with $N$ antennas. In this paper,
we consider the downlink data transmission. We assume that perfect
channel state information can be obtained at the CPU. The channel
between the $m$th AP and user $k$ is denoted as $\boldsymbol{h}_k[m]
\in\mathbb{C}^{N}$. The signal intended for user $k$ can be written
as
\begin{align}
y_k=&\textstyle\sum_{m=1}^M (\boldsymbol{h}_k[m])^H\boldsymbol{v}_k[m]
 u_k+
\nonumber\\
&\textstyle\sum_{k'\neq k}\sum_{m=1}^{M}(\boldsymbol{h}_{k}[m])^H
\boldsymbol{v}_{k'}[m]u_{k'}+n_k, \forall k,
\label{received-signal}
\end{align}
where $u_k$ is the symbol intended for user $k$,
$\boldsymbol{v}_k[m]\in\mathbb{C}^{N}$ is the beamforming vector
designed for the $k$th user at the $m$th AP, and $n_k\sim
\mathcal{CN}(0,\sigma_k^2)$ is the additive Gaussian noise. To
simplify notations, denote $\boldsymbol{h}_{k}\triangleq
[(\boldsymbol{h}_k[1])^T\cdots(\boldsymbol{h}_k[M])^T]^T \in
\mathbb{C}^{MN}$ as the joint channel vector between user $k$ and
all APs. Also denote $\boldsymbol{v}_{k}\triangleq
[(\boldsymbol{v}_k[1])^T\cdots(\boldsymbol{v}_k[M])^T]^T\in
\mathbb{C}^{MN}$ as the joint beamforming vector for user $k$. The
achievable rate of user $k$ can therefore be given as
\begin{align}
& R_k(\{\boldsymbol{v}_{k'}\}_{k'=1}^K)=
\text{log}\Big(1+\frac{|\boldsymbol{h}_k^H\boldsymbol{v}_k|^2}
{\sum_{k'\neq k} |\boldsymbol{h}_k^H\boldsymbol{v}_{k'}|^2+
\sigma_k^2}\Big), \ \forall k.
\label{def-rate-cran}
\end{align}

\subsection{Max-Min Beamforming}
We consider the following max-min beamforming problem:
\begin{align}
\textbf{\text{MMB}}: \
\mathop {\max }_{\{\boldsymbol{v}_{k}\}_{1\leq k\leq K}}
 &\mathop {\min }_{1\leq k\leq K} \
R_k(\{\boldsymbol{v}_{k'}\}_{1\leq k'\leq K})
\nonumber\\
\text{s.t.} & \textstyle \
\sum_{k=1}^K ||\boldsymbol{v}_k[m]||_2^2\leq p_m,
1\leq m\leq M,
\label{maxmin-beam}
\end{align}
where $p_m$ is the maximum transmit power of the $m$th AP.
The above formulation aims to maximize the minimum rate
among all users.


\begin{algorithm}
\caption{The bisection method} \label{alg:bsm}
\begin{algorithmic}
\STATE{\textbf{Inputs}: $s_{\text{min}}$, $s_{\text{max}}$ and
$s_{\text{ter}}$. All initial vectors are set to $\boldsymbol{0}$}.
\STATE{\textbf{While} $s_{\text{max}}-s_{\text{min}}
\leq s_{\text{ter}}$
\textbf{do}}
\STATE{\ \ \ \ \textcircled{1} \ Set $s=s_c\triangleq\frac{s_{\text{min}}+
s_{\text{max}}}{2}$ and then employing an algorithm to solve problem
(\ref{feasibility-problem}).}
\STATE{\ \ \ \ \textcircled{2} \ If (\ref{feasibility-problem})
is feasibile (resp. infeasibile), update $s_{\text{min}}=s_c$
(resp. $s_{\text{max}}=s_c$) ; }
\par{\textbf{End While and Output}
$\{\boldsymbol{v}_{k'}\}_{1\leq k'\leq K}$;}
\end{algorithmic}
\end{algorithm}

\subsection{The Bisection Framework}
Due to the complex structure of
$R_k(\{\boldsymbol{v}_{k'}\}_{1\leq k'\leq K})$, directly solving
problem (\ref{maxmin-beam}) is difficult. A commonly used method
is the bisection method (BSM), which is summarized in Algorithm
\ref{alg:bsm}. To understand BSM, consider the following problem:
\begin{align}
\textbf{\text{BSM}}: \
\mathop {\max }\limits_{\{\boldsymbol{v}_{k}\}_{1\leq k\leq K},\ s}
& \ s
\nonumber\\
\text{s.t.} & \textstyle \
\sum_{k=1}^K ||\boldsymbol{v}_k[m]||_2^2\leq p_m, \ 1\leq m\leq M,
\nonumber\\
& \ R_k(\{\boldsymbol{v}_{k'}\}_{1\leq k'\leq K})\geq s, \forall k.
\label{bsm-problem}
\end{align}
where $s$ is an auxiliary variable. Clearly, problem
(\ref{bsm-problem}) is equivalent to (\ref{maxmin-beam}).
The idea of BSM is to use bisection search to find the optimal
$s$. In the beginning of Algorithm \ref{alg:bsm}, a sufficiently
large range for $s$ is set, i.e., $[s_{\text{min}},s_{\text{max}}]$.
In each iteration, let $s$ be fixed as $s_c=(s_{\text{min}}
+s_{\text{max}})/2$, (\ref{bsm-problem}) becomes
\begin{align}
\textbf{\text{FC}}: \
\text{Find} & \ \{\boldsymbol{v}_{k}\}_{k=1}^k
\nonumber\\
\text{s.t.} & \textstyle \
\sum_{k=1}^K ||\boldsymbol{v}_k[m]||_2^2\leq p_m, \ 1\leq m\leq M,
\nonumber\\
& \ R_k(\{\boldsymbol{v}_{k'}\}_{k=1}^k)\geq s_{c},
\forall k.
\label{feasibility-problem}
\end{align}
If (\ref{feasibility-problem}) is feasible (resp. infeasible),
$s_{\text{min}}$ (resp. $s_{\text{max}}$) is updated as $s_c$.
This process is repeated until some stopping criterion is met,
say $s_{\text{max}}-s_{\text{min}}<s_{\text{ter}}$, where
$s_{\text{ter}}$ is a prescribed value. Clearly, the key to
solving (\ref{maxmin-beam}) is to efficiently solve each
feasibility-checking (FC) problem (\ref{feasibility-problem}).

\section{A Linearly Constrained Formulation for the Feasibility-Checking Problem}
\label{sec-linear} In this section, we convert the highly
challenging feasibility-checking problem (\ref{feasibility-problem})
into a linearly constrained optimization problem which is much more amiable to solve.


\subsection{Preliminaries}
A set $\mathcal{C}$ is said to be a cone if $\boldsymbol{x}\in
\mathcal{C}\Rightarrow t\boldsymbol{x}\in\mathcal{C}$,
$\forall t\geq0$. Examples of cones include $\mathbb{R}^n$,
$\mathbb{R}^n_{+}$, and $\mathbb{S}_{+}^{n\times n}$, where
$\mathbb{S}_{+}^{n\times n}$ is the set of all $n\times n$
positive semidefinite matrices. Throughout this paper, we
will frequently use the so-called second-order cone (SOC)
defined as
\begin{align}
\mathcal{C}(\tau)\triangleq\{[\boldsymbol{x}^T \
y]^T\in\mathbb{R}^n\times \mathbb{R}_{+} \ | \
\|\boldsymbol{x}\|_2\leq \tau\cdot y\}, \label{def-soc}
\end{align}
where $\tau$ is a given nonnegative scalar. Notably, the
SOC is convex and closed.

\subsection{Problem Formulation}
\label{sec-linear-b}
First notice that $R_k(\{\boldsymbol{v}_{k'}\}_{1\leq k'\leq K})
\geq s_c$ in (\ref{feasibility-problem}) can be equivalently
written as
\begin{align}
&\textstyle |\boldsymbol{h}_k^H\boldsymbol{v}_k|^2 \geq
(2^{s_c}-1)\big(\sum_{k'\neq k}
|\boldsymbol{h}_k^H\boldsymbol{v}_{k'}|^2+\sigma_k^2\big)
\nonumber\\
&\Rightarrow\textstyle
2^{s_c}|\boldsymbol{h}_k^H\boldsymbol{v}_k|^2 \geq (2^{s_c}-1)
\big(\sum_{k'=1}^K
|\boldsymbol{h}_k^H\boldsymbol{v}_{k'}|^2 +\sigma_k^2\big).
\label{problem-form-2}
\end{align}
Substituting (\ref{problem-form-2}) into (\ref{feasibility-problem})
yields
\begin{align}
\textbf{\text{FC}}: \ &\text{Find}  \ \{\boldsymbol{v}_{k}\}_{1\leq k\leq K}
\nonumber\\
\text{s.t.} & \textstyle \
\sum_{k=1}^K ||\boldsymbol{v}_k[m]||_2^2\leq p_m, \ 1\leq m\leq M,
\nonumber\\
&\textstyle \ e(s_c)\cdot|\boldsymbol{h}_k^H\boldsymbol{v}_k|^2
\geq \big(\sum_{k'=1}^K |\boldsymbol{h}_k^H\boldsymbol{v}_{k'}|^2
+\sigma_k^2\big), \ \forall k,
\label{problem-form-3}
\end{align}
where $e(s_c)\triangleq \frac{2^{s_c}}{2^{s_c}-1}$ is a function
of $s_c$. Problem (\ref{problem-form-3}) enjoys the so-called
\emph{\textbf{phase ambiguity property}}. This means that, if
there exists a set of $\{\boldsymbol{v}_{k}\}_{1\leq k\leq K}$
such that (\ref{problem-form-3}) is feasible, then
$\{\hat{\boldsymbol{v}}_{k}\triangleq\exp\{-j\theta_k\}
\cdot\boldsymbol{v}_k, 0\leq\theta_k\leq 2\pi\}_{1\leq k\leq K}$
is also a feasible solution for (\ref{problem-form-3}). Based on
this equivalence, we have the following proposition.

\newtheorem{proposition}{Proposition}
\begin{proposition}
\label{theorem-content-1}
Solving problem (\ref{problem-form-3}) is equivalent to solving
the following problem:
\begin{align}
\textbf{\emph{\text{FC}}}: \ & \emph{\text{Find}} \
\{\boldsymbol{v}_k\}_{1\leq k\leq K}
\nonumber\\
\text{s.t.} & \textstyle \
\sum_{k=1}^K ||\boldsymbol{v}_k[m]||_2^2\leq p_m, \ 1\leq m\leq M,
\nonumber\\
&\textstyle \ e(s_c)^{\frac{1}{2}}
\cdot\emph{\text{Re}}\{\boldsymbol{h}_k^H\boldsymbol{v}_k\}
\geq \big(\sum_{k'=1}^K |\boldsymbol{h}_k^H\boldsymbol{v}_{k'}|^2
+\sigma_k^2\big)^{\frac{1}{2}}, \ \forall k.
\label{problem-form-4}
\end{align}
\end{proposition}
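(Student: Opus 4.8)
The plan is to prove the equivalence by showing that problem (\ref{problem-form-3}) is feasible if and only if problem (\ref{problem-form-4}) is feasible, and that a feasible point of one can be explicitly converted into a feasible point of the other. Since the per-AP power constraints $\sum_{k=1}^K\|\boldsymbol{v}_k[m]\|_2^2\leq p_m$ are identical in both formulations, I would only need to reconcile the two forms of the rate constraint. The whole argument rests on two elementary facts: for any complex number $z$ one has $|z|^2\geq(\text{Re}\{z\})^2$, with equality when $z$ is real and nonnegative; and the \emph{phase-ambiguity property} noted just above, which lets me rotate each $\boldsymbol{v}_k$ by an arbitrary phase without affecting any magnitude $|\boldsymbol{h}_k^H\boldsymbol{v}_{k'}|$ or the power budget.

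For the direction (\ref{problem-form-4}) $\Rightarrow$ (\ref{problem-form-3}), I would start from a feasible $\{\boldsymbol{v}_k\}$ of (\ref{problem-form-4}). Since the right-hand side $(\sum_{k'=1}^K|\boldsymbol{h}_k^H\boldsymbol{v}_{k'}|^2+\sigma_k^2)^{1/2}$ is nonnegative, the constraint forces $\text{Re}\{\boldsymbol{h}_k^H\boldsymbol{v}_k\}\geq 0$, so both sides are nonnegative and may be squared. Invoking $|\boldsymbol{h}_k^H\boldsymbol{v}_k|^2\geq(\text{Re}\{\boldsymbol{h}_k^H\boldsymbol{v}_k\})^2$, the squared inequality immediately yields the quadratic constraint $e(s_c)|\boldsymbol{h}_k^H\boldsymbol{v}_k|^2\geq\sum_{k'=1}^K|\boldsymbol{h}_k^H\boldsymbol{v}_{k'}|^2+\sigma_k^2$ of (\ref{problem-form-3}); hence the very same $\{\boldsymbol{v}_k\}$ is feasible for (\ref{problem-form-3}).

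The converse direction (\ref{problem-form-3}) $\Rightarrow$ (\ref{problem-form-4}) is where the phase-ambiguity property is essential, and this is the main obstacle: a feasible point of (\ref{problem-form-3}) controls only the magnitude $|\boldsymbol{h}_k^H\boldsymbol{v}_k|$, whereas (\ref{problem-form-4}) constrains the real part, which could a priori be negative or strictly smaller in magnitude. To bridge this gap, given a feasible $\{\boldsymbol{v}_k\}$ of (\ref{problem-form-3}) I would write $\boldsymbol{h}_k^H\boldsymbol{v}_k=|\boldsymbol{h}_k^H\boldsymbol{v}_k|\exp\{j\phi_k\}$ and define the rotated vectors $\hat{\boldsymbol{v}}_k\triangleq\exp\{-j\phi_k\}\boldsymbol{v}_k$. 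By construction $\boldsymbol{h}_k^H\hat{\boldsymbol{v}}_k=|\boldsymbol{h}_k^H\boldsymbol{v}_k|$ is real and nonnegative, so $\text{Re}\{\boldsymbol{h}_k^H\hat{\boldsymbol{v}}_k\}=|\boldsymbol{h}_k^H\hat{\boldsymbol{v}}_k|$. Because all magnitudes and the power budget are invariant under this rotation, $\{\hat{\boldsymbol{v}}_k\}$ still satisfies the quadratic constraint; taking square roots of that inequality, which is legitimate since $e(s_c)>0$ for $s_c>0$ and both sides are nonnegative, and then substituting $|\boldsymbol{h}_k^H\hat{\boldsymbol{v}}_k|$ by $\text{Re}\{\boldsymbol{h}_k^H\hat{\boldsymbol{v}}_k\}$ produces exactly the constraint of (\ref{problem-form-4}).

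Combining the two implications shows that the feasible sets of (\ref{problem-form-3}) and (\ref{problem-form-4}) are simultaneously empty or nonempty, and that the explicit phase rotation makes the recovery of a solution constructive, which is all that the word \emph{equivalent} requires here. The only point demanding care is verifying $e(s_c)>0$ so that the square-root step is valid; this holds whenever $s_c>0$, i.e., in the regime of interest where a strictly positive common rate is sought.
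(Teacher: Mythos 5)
Your proof is correct and follows essentially the same route as the paper, which justifies the proposition by invoking the phase-ambiguity property to rotate each $\boldsymbol{v}_k$ so that $\boldsymbol{h}_k^H\boldsymbol{v}_k$ becomes real and nonnegative, combined with the elementary bound $|\boldsymbol{h}_k^H\boldsymbol{v}_k|^2\geq(\text{Re}\{\boldsymbol{h}_k^H\boldsymbol{v}_k\})^2$ for the reverse direction. Your explicit treatment of both implications and the remark on $e(s_c)>0$ for $s_c>0$ are consistent with the paper's (largely implicit) argument.
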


Since problem (\ref{problem-form-3}) and (\ref{problem-form-4})
are equivalent, we now focus on problem (\ref{problem-form-4}).
Before proceeding, we introduce several notations to
facilitate our subsequent derivations. Define
\begin{align}
&\boldsymbol{H}_k\triangleq
\begin{bmatrix}
\tilde{\boldsymbol{H}}_k &\cdots  &\boldsymbol{0}_{2\times 2MN} & \cdots &\boldsymbol{0}_{2\times 2MN}  \\
\vdots  & \ddots  & \vdots  &\vdots &\vdots  \\
\boldsymbol{0}_{2\times 2MN}  & \cdots  & \tilde{\boldsymbol{H}}_k  &\cdots &\boldsymbol{0}_{2\times 2MN}  \\
\vdots  & \cdots & \vdots  &\ddots & \vdots\\
\boldsymbol{0}_{2\times 2MN}  & \cdots &\boldsymbol{0}_{2\times 2MN} &\cdots & \tilde{\boldsymbol{H}}_k\\
\boldsymbol{0}_{1\times 2MN}  & \cdots &\boldsymbol{0}_{1\times 2MN} & \cdots & \boldsymbol{0}_{1\times 2MN} \\
\boldsymbol{0}_{1\times 2MN}  & \cdots  &e(s_c)^{1/2}\cdot\tilde{\boldsymbol{h}}_{k}^T  &\cdots & \boldsymbol{0}_{1\times 2MN}
\end{bmatrix},
\nonumber\\
&\tilde{\boldsymbol{v}}\triangleq[
\tilde{\boldsymbol{v}}_1^T \ \cdots \
\tilde{\boldsymbol{v}}_K^T]^T, \
\boldsymbol{b}_k\triangleq
[\boldsymbol{0}_{1\times 2}  \ \cdots  \ \boldsymbol{0}_{1\times 2}  \ \sigma_k  \ 0]^T, \ k\leq K,
\label{problem-form-4-1}
\end{align}
where $\tilde{\boldsymbol{h}}_{k}^T\triangleq
[\text{Re}\{\boldsymbol{h}_k\}^T \ -\text{Im}\{\boldsymbol{h}_k\}^T]$,
\begin{align}
&\tilde{\boldsymbol{H}}_k\triangleq
\begin{bmatrix}
\text{Re}\{\boldsymbol{h}_k\}^T & -\text{Im}\{\boldsymbol{h}_k\}^T\\
\text{Im}\{\boldsymbol{h}_k\}^T & \text{Re}\{\boldsymbol{h}_k\}^T
\end{bmatrix},
\tilde{\boldsymbol{v}}_k\triangleq
\begin{bmatrix}
\text{Re}\{\boldsymbol{v}_k\} \\
\text{Im}\{\boldsymbol{v}_k\}
\end{bmatrix}.
\label{problem-form-4-2}
\end{align}
Notice that there are $K$ block columns in $\boldsymbol{H}_k$
and $e(s_c)^{1/2}\cdot\tilde{\boldsymbol{h}}_{k}^T$ lies
in the $k$th block column. Recall that $\tilde{\boldsymbol{H}}_k$
is a $2\times 2MN$ matrix, thus $\boldsymbol{H}_k$ is
a $(2K+2)\times 2MNK$ matrix. Meanwhile, $\tilde{\boldsymbol{v}}_k$
is a length-$2MN$ vector and $\boldsymbol{b}_k$ is a
length-$(2K+2)$ vector. Based on the above notations, we
have
\begin{align}
&\tilde{\boldsymbol{H}}_k\tilde{\boldsymbol{v}}_{k'}=
\begin{bmatrix}
\text{Re}\{\boldsymbol{h}_k^H\boldsymbol{v}_{k'}\} &
\text{Im}\{\boldsymbol{h}_k^H\boldsymbol{v}_{k'}\}
\end{bmatrix}^T,
\nonumber\\
&\tilde{\boldsymbol{h}}_{k}^T\tilde{\boldsymbol{v}}_{k'}
=\text{Re}\{\boldsymbol{h}_k^H\boldsymbol{v}_{k'}\}, \
1\leq k'\leq K.
\label{problem-form-4-2-1}
\end{align}
With these notations, the second line in (\ref{problem-form-4})
can be equivalently written as
$\boldsymbol{H}_k\tilde{\boldsymbol{v}} +\boldsymbol{b}_k
\in\mathcal{C}(e(s_c)^{\frac{1}{2}})$, where $\mathcal{C}(\cdot)$
is an SOC defined in (\ref{def-soc}). To see the equivalence,
note that
\begin{align}
&\boldsymbol{H}_k\tilde{\boldsymbol{v}}+\boldsymbol{b}_k=
[(\tilde{\boldsymbol{H}}_k\tilde{\boldsymbol{v}}_1)^T \ \cdots \
(\tilde{\boldsymbol{H}}_k\tilde{\boldsymbol{v}}_K)^T  \ \ \sigma_k \ \
e(s_c)^{\frac{1}{2}}\tilde{\boldsymbol{h}}_{k}^T\tilde{\boldsymbol{v}}_k]^T
\nonumber\\
&\overset{(\ref{problem-form-4-2-1})}{=}
[\text{Re}\{\boldsymbol{h}_k^H\boldsymbol{v}_1\} \
\text{Im}\{\boldsymbol{h}_k^H\boldsymbol{v}_1\} \ \cdots \
\text{Re}\{\boldsymbol{h}_k^H\boldsymbol{v}_K\} \
\text{Im}\{\boldsymbol{h}_k^H\boldsymbol{v}_K\}
\nonumber\\
&\qquad\quad\sigma_k \ \
e(s_c)^{\frac{1}{2}}\text{Re}\{\boldsymbol{h}_k^H\boldsymbol{v}_k\}]^T.
\label{problem-form-4-2-2}
\end{align}
Based on (\ref{problem-form-4-2-2}), if $\tilde{\boldsymbol{v}}$
satisfies $\boldsymbol{H}_k\tilde{\boldsymbol{v}}+\boldsymbol{b}_k
\in\mathcal{C}(e(s_c)^{\frac{1}{2}})$, then we have
\begin{align}
&\textstyle\big((\sum_{k'=1}^K\text{Re}\{\boldsymbol{h}_k^H
\boldsymbol{v}_{k'}\}^2+\text{Im}\{\boldsymbol{h}_k^H
\boldsymbol{v}_{k'}\}^2\})+\sigma_k^2\big)^{\frac{1}{2}}
\nonumber\\
\leq&
e(s_c)^{\frac{1}{2}}\cdot\text{Re}\{\boldsymbol{h}_k^H\boldsymbol{v}_k\},
\end{align}
which is exactly the QoS constraint in (\ref{problem-form-4}).
For simplicity, hereafter we use $\mathcal{C}$ to refer to
$\mathcal{C}(e(s_c)^{\frac{1}{2}})$. Using the above equivalence,
problem (\ref{problem-form-4}) can be written as
\begin{align}
\textbf{\text{FC}}: \ \text{Find} & \quad \tilde{\boldsymbol{v}}
\nonumber\\
\text{s.t.} & \quad \boldsymbol{H}_k\tilde{\boldsymbol{v}}+
\boldsymbol{b}_k\in\mathcal{C}, \forall k, \
\nonumber\\
&\quad\textstyle
\sum_{k=1}^K ||\boldsymbol{v}_k[m]||_2^2\leq p_m, \ 1\leq m\leq M,
\label{feasibility-compact}
\end{align}
To rewrite the constraints in a more compact form, define a convex
set $\mathcal{D}\triangleq\mathcal{C}\times\cdots\times \mathcal{C}
\times \mathcal{P}_1 \times \cdots \times\mathcal{P}_M$, where
$\times$ denotes the Cartesian product, and also define
\begin{align}
&\textstyle\mathcal{P}_m\triangleq\Big\{\{\boldsymbol{v}_k[m]\}_{k=1}^K \ \Big| \
\sum_{k=1}^K ||\boldsymbol{v}_k[m]||_2^2\leq p_m\Big\},
\label{feasibility-constraints}
\end{align}
Then the constraints in (\ref{feasibility-compact}) can be equivalently
written as
\begin{align}
\boldsymbol{H}\tilde{\boldsymbol{v}}+
\boldsymbol{b}\in\mathcal{D}
\label{feasibility-constraints-1}
\end{align}
where
\begin{align}
&\boldsymbol{H}\triangleq
[\boldsymbol{H}_1^T \ \cdots \ \boldsymbol{H}_K^T \
\boldsymbol{P}_{ap}
]^T\in\mathbb{R}^{(K(2K+2)+2MNK)\times 2MNK},
\nonumber\\
&\boldsymbol{b}\triangleq
[\boldsymbol{b}_1^T \ \cdots \ \boldsymbol{b}_K^T \
\boldsymbol{0}_{2MKN\times 1}^T
]^T\in\mathbb{R}^{K(2K+2)+2MNK},
\label{feasibility-compact-1-1}
\end{align}
In the above, $\boldsymbol{P}_{ap}$ is a permutation matrix
($\boldsymbol{P}_{ap}^T\boldsymbol{P}_{ap}=\boldsymbol{I}$ and
$\boldsymbol{P}_{ap}\boldsymbol{P}_{ap}^T=\boldsymbol{I}$)
that takes $\tilde{\boldsymbol{v}}$ as the input and then outputs
$\breve{\boldsymbol{v}}$, namely, $\boldsymbol{P}_{ap}\tilde{\boldsymbol{v}}
=\breve{\boldsymbol{v}}$. Note that $\breve{\boldsymbol{v}}$
is a permuted vector of $\tilde{\boldsymbol{v}}$ defined as
\begin{align}
\breve{\boldsymbol{v}}=
\begin{bmatrix}
\breve{\boldsymbol{v}}_1 \\
\vdots \\
\breve{\boldsymbol{v}}_M
\end{bmatrix}, \
\breve{\boldsymbol{v}}_m=
\begin{bmatrix}
\text{Re}\{\boldsymbol{v}_1[m]\} \\
\text{Im}\{\boldsymbol{v}_1[m]\} \\
\vdots \\
\text{Re}\{\boldsymbol{v}_K[m]\} \\
\text{Im}\{\boldsymbol{v}_K[m]\} \\
\end{bmatrix}, \ 1\leq m\leq M.
\end{align}
Clearly, $\breve{\boldsymbol{v}}_m\in\mathcal{P}_m$ corresponds
to the $m$th power constraint in (\ref{feasibility-compact}).

Replacing the constraints in (\ref{feasibility-compact})
with (\ref{feasibility-constraints-1}), we obtain
\begin{align}
\textbf{\text{FC}}: \ \text{Find} &\quad \tilde{\boldsymbol{v}}
\nonumber\\
\text{s.t.} &\quad  \boldsymbol{H}\tilde{\boldsymbol{v}}+
\boldsymbol{b}\in\mathcal{D}.
\label{feasibility-compact2}
\end{align}

\subsection{Linearly Constrained Formulation}
Solving problem (\ref{feasibility-compact2}) is equivalent to
solving the following unconstrained problem:
\begin{align}
\textstyle\mathop {\min}\limits_{\tilde{\boldsymbol{v}}}  \
f(\boldsymbol{H}\tilde{\boldsymbol{v}}+\boldsymbol{b})
\triangleq\frac{1}{2}\|\boldsymbol{H}\tilde{\boldsymbol{v}}
+\boldsymbol{b}-\text{Proj}_{\mathcal{D}}\{\boldsymbol{H}
\tilde{\boldsymbol{v}}+\boldsymbol{b}\}\|_2^2,
\label{problem-linear}
\end{align}
where $\text{Proj}_{\mathcal{D}}\{\boldsymbol{x}\}$ is the
projection operator onto $\mathcal{D}$. In particular, we have
$\text{Proj}_{\mathcal{D}}\{\boldsymbol{x}\}=\boldsymbol{x}$
if $\boldsymbol{x}\in\mathcal{D}$, and
\begin{align}
\text{Proj}_{\mathcal{D}}\{\boldsymbol{x}\}=
\arg\min\limits_{\bar{\boldsymbol{x}}\in\mathcal{D}}
\|\bar{\boldsymbol{x}}-\boldsymbol{x}\|_2
\end{align}
if otherwise. Intuitively, if there exists a $\tilde{\boldsymbol{v}}$
such that $\boldsymbol{H}\tilde{\boldsymbol{v}}+\boldsymbol{b}$
lies in the set $\mathcal{D}$, then the minimum value of
$f(\boldsymbol{H}\tilde{\boldsymbol{v}}+\boldsymbol{b})$ is
$0$. On the contrary, if there does not exist a
$\tilde{\boldsymbol{v}}$ such that
$\boldsymbol{H}\tilde{\boldsymbol{v}} +\boldsymbol{b}$ lies
in $\mathcal{D}$, then the minimum value of
$f(\boldsymbol{H}\tilde{\boldsymbol{v}}+\boldsymbol{b})$
must be larger than $0$. As such, the minimum of
$f(\boldsymbol{H}\tilde{\boldsymbol{v}}+\boldsymbol{b})$
provides a certificate of whether the problem
(\ref{feasibility-compact2}) is feasible. Nevertheless,
since $f(\boldsymbol{H}\tilde{\boldsymbol{v}}+\boldsymbol{b})$
involves a nonlinear projection operator, it remains to
show that the minimum of $f(\boldsymbol{H}
\tilde{\boldsymbol{v}}+\boldsymbol{b})$ is attainable.
Hopefully, the following proposition shows that
$f(\boldsymbol{H}\tilde{\boldsymbol{v}} +\boldsymbol{b})$
is convex and continuously differentiable, which means
that the minimum of $f(\boldsymbol{H}\tilde{\boldsymbol{v}}+
\boldsymbol{b})$ is indeed attainable.

\begin{proposition}
(\cite{AubinCellina12,CensorElfving06}) The objective function
$f(\boldsymbol{x})$ in (\ref{problem-linear}) is convex and
continuously differentiable. The gradient of $f(\boldsymbol{x})$
is given by
\begin{align}
\nabla f(\boldsymbol{x})=\boldsymbol{x}-
\emph{\text{Proj}}_{\mathcal{D}}\{\boldsymbol{x}\}.
\label{Proposition-proof1}
\end{align}
Moreover, $\nabla f(\boldsymbol{x})$ is $1$-Lipschitz continuous.
\end{proposition}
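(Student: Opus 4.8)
The plan is to obtain all three assertions—convexity, continuous differentiability with the stated gradient, and $1$-Lipschitz continuity of $\nabla f$—as consequences of a single structural property of the projection, namely its \emph{firm nonexpansiveness}. Write $P\triangleq\text{Proj}_{\mathcal{D}}$ for brevity. First I would record that $\mathcal{D}$ is a Cartesian product of the second-order cones $\mathcal{C}$ and the balls $\mathcal{P}_m$, each closed and convex; hence $\mathcal{D}$ is closed and convex and $P$ is single-valued and well defined. The entire argument then rests on the variational characterization of the projection onto a closed convex set: for every $\boldsymbol{x}$, the point $P\boldsymbol{x}$ is the unique element of $\mathcal{D}$ obeying
\begin{align}
\langle\boldsymbol{z}-P\boldsymbol{x}, \ \boldsymbol{x}-P\boldsymbol{x}\rangle\leq 0, \quad \forall \boldsymbol{z}\in\mathcal{D}.
\label{var-ineq-plan}
\end{align}

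Next I would derive firm nonexpansiveness from (\ref{var-ineq-plan}). Instantiating (\ref{var-ineq-plan}) at $\boldsymbol{x}$ with the test point $\boldsymbol{z}=P\boldsymbol{y}$, and at $\boldsymbol{y}$ with $\boldsymbol{z}=P\boldsymbol{x}$, then adding the two inequalities and collecting terms gives
\begin{align}
\|P\boldsymbol{x}-P\boldsymbol{y}\|_2^2\leq \langle P\boldsymbol{x}-P\boldsymbol{y}, \ \boldsymbol{x}-\boldsymbol{y}\rangle.
\label{firm-ne-plan}
\end{align}
A one-line algebraic manipulation shows that the complementary map $\boldsymbol{I}-P$ satisfies the same inequality, i.e.\ $\|(\boldsymbol{I}-P)\boldsymbol{x}-(\boldsymbol{I}-P)\boldsymbol{y}\|_2^2\leq \langle(\boldsymbol{I}-P)\boldsymbol{x}-(\boldsymbol{I}-P)\boldsymbol{y}, \ \boldsymbol{x}-\boldsymbol{y}\rangle$. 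Since the claimed gradient is exactly $\boldsymbol{I}-P$, this single inequality is the engine that will drive the remaining claims.

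To pin down the gradient I would use a sandwiching argument. Using $P\boldsymbol{x}\in\mathcal{D}$ and the optimality of $P(\boldsymbol{x}+\boldsymbol{h})$, replacing the latter by $P\boldsymbol{x}$ and expanding the square yields the upper bound $f(\boldsymbol{x}+\boldsymbol{h})-f(\boldsymbol{x})\leq \langle\boldsymbol{x}-P\boldsymbol{x}, \boldsymbol{h}\rangle+\frac{1}{2}\|\boldsymbol{h}\|_2^2$. A symmetric step, replacing $P\boldsymbol{x}$ by $P(\boldsymbol{x}+\boldsymbol{h})$ in the expression for $f(\boldsymbol{x})$, produces a matching lower bound whose cross term $\langle(\boldsymbol{I}-P)(\boldsymbol{x}+\boldsymbol{h})-(\boldsymbol{I}-P)\boldsymbol{x}, \boldsymbol{h}\rangle$ is controlled by the nonexpansiveness of $P$ and is therefore $O(\|\boldsymbol{h}\|_2^2)=o(\|\boldsymbol{h}\|_2)$. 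The two bounds together certify Fr\'echet differentiability with $\nabla f(\boldsymbol{x})=\boldsymbol{x}-P\boldsymbol{x}$; continuity of $\nabla f$ follows because $P$ is nonexpansive, hence continuous. Convexity then follows since a differentiable map whose gradient is monotone is convex, and monotonicity of $\boldsymbol{I}-P$ is immediate from (\ref{firm-ne-plan}). Finally, the $1$-Lipschitz bound follows by combining the $\boldsymbol{I}-P$ form of (\ref{firm-ne-plan}) with the Cauchy--Schwarz inequality, which gives $\|\nabla f(\boldsymbol{x})-\nabla f(\boldsymbol{y})\|_2\leq\|\boldsymbol{x}-\boldsymbol{y}\|_2$.

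The hard part will be the differentiability step rather than the Lipschitz or convexity conclusions. The upper bound is essentially free, but the matching lower bound demands careful bookkeeping so that the cross term is genuinely second order in $\boldsymbol{h}$; this is precisely where nonexpansiveness of the projection must be invoked. Once firm nonexpansiveness (\ref{firm-ne-plan}) is in hand—the one genuinely nontrivial ingredient—the convexity and the exact Lipschitz constant of $1$ reduce to short algebraic consequences, so I would concentrate the effort on establishing (\ref{var-ineq-plan}) and (\ref{firm-ne-plan}) cleanly.
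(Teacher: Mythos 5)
Your proof is correct and complete. Note, however, that the paper does not prove this proposition at all: it is stated as a known result and attributed to the cited references, so there is no in-paper argument to compare against. What you have written is essentially the standard textbook derivation for the half squared distance function $f(\boldsymbol{x})=\tfrac{1}{2}\,\mathrm{dist}(\boldsymbol{x},\mathcal{D})^2$: the variational characterization $\langle\boldsymbol{z}-P\boldsymbol{x},\,\boldsymbol{x}-P\boldsymbol{x}\rangle\leq 0$ for all $\boldsymbol{z}\in\mathcal{D}$ yields firm nonexpansiveness of $P$, hence of $\boldsymbol{I}-P$; the two-sided sandwich built from the suboptimal test points $P\boldsymbol{x}$ and $P(\boldsymbol{x}+\boldsymbol{h})$ pins the first-order term down to $\langle\boldsymbol{x}-P\boldsymbol{x},\boldsymbol{h}\rangle$ up to an $O(\|\boldsymbol{h}\|_2^2)$ error; monotonicity of $\boldsymbol{I}-P$ gives convexity; and firm nonexpansiveness plus Cauchy--Schwarz gives the Lipschitz constant $1$. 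Each step checks out, including the observation that the sharp constant comes from firm nonexpansiveness rather than mere nonexpansiveness. The only caveat is organizational: since all four conclusions hinge on the single inequality $\|P\boldsymbol{x}-P\boldsymbol{y}\|_2^2\leq\langle P\boldsymbol{x}-P\boldsymbol{y},\boldsymbol{x}-\boldsymbol{y}\rangle$, you should make sure the closedness and convexity of $\mathcal{D}$ (a finite Cartesian product of second-order cones and norm balls) is stated explicitly before invoking uniqueness of the projection, which you do. Your argument is self-contained and could replace the citation if a proof were required.
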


To solve problem (\ref{problem-linear}), we introduce an auxiliary
variable $\boldsymbol{w}\triangleq[\boldsymbol{w}_1^T \ \cdots \
\boldsymbol{w}_{K+1}^T]^T$ to convert it into the following form:
\begin{align}
\textbf{\text{LCP}}: \
\mathop {\min}\limits_{\tilde{\boldsymbol{v}},\boldsymbol{w}} & \
\textstyle f(\boldsymbol{w})=\frac{1}{2}\|\boldsymbol{w}-
\text{Proj}_{\mathcal{D}}\{\boldsymbol{w}\}\|_2^2
\nonumber\\
\text{s.t.} & \ \boldsymbol{A}\tilde{\boldsymbol{v}}+\boldsymbol{b}
=\boldsymbol{w},
\label{problem-linear-cons}
\end{align}
where \textbf{LCP} is short for linearly constrained problem. To avoid
notational confusion in subsequent derivations, we have used
$\boldsymbol{A}$ to represent $\boldsymbol{H}$. Notice that
$\boldsymbol{A}$ can be horizontally split into $K$ submatrices,
that is, $\boldsymbol{A}=[\boldsymbol{A}_1\cdots\boldsymbol{A}_K]$,
in which
\begin{align}
\boldsymbol{A}_j\triangleq&
\begin{bmatrix}
[\boldsymbol{H}_1]_j^T &  \cdots &  [\boldsymbol{H}_K]_j^T &
[\boldsymbol{P}_{ap}]_j
\end{bmatrix}^T
\nonumber\\
&\in\mathbb{R}^{(K(2K+2)+2MNK)\times 2MN}, \ 1\leq j \leq K,
\label{problem-linear-cons-1-1}
\end{align}
and $[\boldsymbol{H}_k]_j$ (resp. $[\boldsymbol{P}_{ap}]_j$)
is the $j$th block column of $\boldsymbol{H}_k$ (resp.
$\boldsymbol{P}_{ap}$), see (\ref{problem-form-4-1}). Clearly,
$[\boldsymbol{H}_k]_j$ (resp. $[\boldsymbol{P}_{ap}]_j$) is a
$(2K+2)\times 2MN$ (resp. $2MKN\times 2MN$) matrix. Recalling
that $\boldsymbol{P}_{ap}$ is a permutation matrix, we have
$[\boldsymbol{P}_{ap}]_j^T[\boldsymbol{P}_{ap}]_{j'}=
\boldsymbol{0}$, $j'\neq j$. Thus we know that $\boldsymbol{A}_j$
and $\boldsymbol{A}_{j'}$ are orthogonal to each other,
$\forall j\neq j'$, which means
$\boldsymbol{A}_j^T\boldsymbol{A}_{j'} =\boldsymbol{0}$. Such
an orthogonality property plays a key role in the design of our
algorithm. With $\boldsymbol{A}=
[\boldsymbol{A}_1\cdots\boldsymbol{A}_K]$,
(\ref{problem-linear-cons}) can be equivalently written as
\begin{align}
\textbf{\text{LCP}}: \
\mathop {\min}\limits_{\tilde{\boldsymbol{v}},\boldsymbol{w}} & \
\textstyle f(\boldsymbol{w})=\frac{1}{2}\|\boldsymbol{w}-
\text{Proj}_{\mathcal{D}}\{\boldsymbol{w}\}\|_2^2
\nonumber\\
\text{s.t.} & \ \textstyle\sum_{j=1}^K
\boldsymbol{A}_j\tilde{\boldsymbol{v}}_j+\boldsymbol{b}=
\boldsymbol{w}.
\label{problem-linear-cons2}
\end{align}

\section{Standard ADMM-Based Method}
\label{sec-standard-ADMM} In this section, we introduce a
standard ADMM algorithm to solve (\ref{problem-linear-cons2}).
By analyzing the computational complexity of this algorithm,
we will see that its per-iteration computational cost is high,
especially for large-scale systems. This motivates us to
develop a randomized ADMM method in the next section.

\subsection{Standard ADMM for Solving (\ref{problem-linear-cons2})}
\label{sec-standard-ADMM-sub1}
The augmented Lagrangian function of (\ref{problem-linear-cons2})
is given as
\begin{align}
 L(\boldsymbol{w},\tilde{\boldsymbol{v}},\boldsymbol{\lambda})
=&\textstyle f(\boldsymbol{w})+\langle\boldsymbol{\lambda},
\sum_{j=1}^K \boldsymbol{A}_j\tilde{\boldsymbol{v}}_j
+\boldsymbol{b}-\boldsymbol{w}\rangle+
\nonumber\\
&\textstyle\frac{\beta}{2}\|\sum_{j=1}^K
\boldsymbol{A}_j\tilde{\boldsymbol{v}}_j
+\boldsymbol{b}-\boldsymbol{w}\|_2^2,
\end{align}
where $\boldsymbol{\lambda}\in\mathbb{R}^{(2K+2)K+2MNK}$ is the
Lagrangian multiplier, and $\beta$ is a parameter that needs to
be tuned. Based on $L(\boldsymbol{w},\tilde{\boldsymbol{v}},
\boldsymbol{\lambda})$, we can easily deduce a standard ADMM
for (\ref{problem-linear-cons2}):
\begin{align}
&\textbf{\text{Standard ADMM}}:
\nonumber\\
&\left \{
\begin{array}{ll}
\textstyle\tilde{\boldsymbol{v}}^{t}
=\arg\min\limits_{\tilde{\boldsymbol{v}}} \
\frac{\beta}{2}\|\sum_{j=1}^K \boldsymbol{A}_j\tilde{\boldsymbol{v}}_j
+\boldsymbol{b}-\boldsymbol{w}^{t-1}+
\frac{1}{\beta}\boldsymbol{\lambda}^{t-1}\|_2^2,
\\
\textstyle\boldsymbol{w}^{t}=\arg\min\limits_{\boldsymbol{w}}
\ f(\boldsymbol{w})+\frac{\beta}{2}\|\sum\limits_{j=1}^K
\boldsymbol{A}_j\tilde{\boldsymbol{v}}_j^{t}
+\boldsymbol{b}-\boldsymbol{w}+
\frac{1}{\beta}\boldsymbol{\lambda}^{t-1}\|_2^2,
\\
\textstyle\boldsymbol{\lambda}^{t}=\boldsymbol{\lambda}^{t-1}+
\beta(\sum_{j=1}^K \boldsymbol{A}_j\tilde{\boldsymbol{v}}_j^{t}
+\boldsymbol{b}-\boldsymbol{w}^{t}).
\end{array}
\right.
\label{ADMM-1}
\end{align}
All subproblems in (\ref{ADMM-1}) admit closed-form solutions.
We start with the $\tilde{\boldsymbol{v}}^{t}$-subproblem.
\subsubsection{Solving the $\tilde{\boldsymbol{v}}^{t}$-Subproblem}
Using the orthogonality property between $\boldsymbol{A}_j$ and
$\boldsymbol{A}_{j'}$, $\forall j\neq j'$, the
$\tilde{\boldsymbol{v}}^{t}$-subproblem can be decomposed as
\begin{align}
\tilde{\boldsymbol{v}}^{t}=\arg\min\limits_{\tilde{\boldsymbol{v}}} \
\textstyle\sum_{i=1}^Kg_i(\tilde{\boldsymbol{v}}_i;
\boldsymbol{w}^{t-1},\boldsymbol{\lambda}^{t-1}),
\label{ADMM-2}
\end{align}
where
\begin{align}
\textstyle g_i(\tilde{\boldsymbol{v}}_i;\boldsymbol{w}^{t-1},
\boldsymbol{\lambda}^{t-1})\triangleq\frac{\beta}{2}
\|\boldsymbol{A}_i\tilde{\boldsymbol{v}}_i
+\boldsymbol{D}_i(\boldsymbol{b}-\boldsymbol{w}^{t-1}+
\frac{1}{\beta}\boldsymbol{\lambda}^{t-1})\|_2^2,
\end{align}
and $\boldsymbol{D}_i$ is a $0/1$ binary diagonal matrix. The
$l$th diagonal element of $\boldsymbol{D}_i$ is $1$ only when
the $l$th row of $\boldsymbol{A}_i$ is nonzero. The solution
to the $\tilde{\boldsymbol{v}}^{t}$-subproblem is thus given as:
\begin{align}
\textstyle\tilde{\boldsymbol{v}}_i^t=
-(\boldsymbol{A}_i^T\boldsymbol{A}_i)^{-1}\boldsymbol{A}_i^T
\boldsymbol{D}_i(\boldsymbol{b}-\boldsymbol{w}^{t-1}+
\frac{1}{\beta}\boldsymbol{\lambda}^{t-1}), \ \forall i.
\label{ADMM-3-1}
\end{align}

\subsubsection{Solving the $\boldsymbol{w}^{t}$-Subproblem}
Recall that $\mathcal{D}\triangleq\mathcal{C}\times \cdots \times
\mathcal{C} \times \mathcal{P}_1 \times \cdots\mathcal{P}_M$.
Thus we have
$f(\boldsymbol{w})=\sum_{i=1}^{K+1} f_i(\boldsymbol{w}_i)$, where
\begin{align}
&\textstyle f_i(\boldsymbol{w}_i)
\triangleq\frac{1}{2}\|\boldsymbol{w}_i-
\text{Proj}_{\mathcal{C}}\{\boldsymbol{w}_i\}\|_2^2, \
1\leq i \leq K,
\nonumber\\
&\textstyle f_{K+1}(\boldsymbol{w}_{K+1})
\triangleq\frac{1}{2}\|\boldsymbol{w}_{K+1}
-\text{Proj}_{\mathcal{P}}\{\boldsymbol{w}_{K+1}\}\|_2^2.
\label{ADMM-4-1}
\end{align}
and $\mathcal{P}\triangleq \mathcal{P}_1 \times \cdots \times
\mathcal{P}_M$. The closed-form expressions of
$\text{Proj}_{\mathcal{C}}\{\boldsymbol{w}_i\}$ and
$\text{Proj}_{\mathcal{P}}\{\boldsymbol{w}_{K+1}\}$ will be
introduced later. Using the separable structure of $f$, the
$\boldsymbol{w}^{t}$-subproblem can be decomposed as
\begin{align}
&\textstyle\boldsymbol{w}^{t}=\arg\min\limits_{\boldsymbol{w}}
\ \sum_{i=1}^{K+1}f_i(\boldsymbol{w}_i)+
\frac{\beta}{2}\|\boldsymbol{w}_i-\boldsymbol{d}_i^{t-1}\|_2^2,
\label{ADMM-4}
\end{align}
where $\boldsymbol{d}_k^{t-1}$, $k\leq K$, is a vector composed
of the $((k-1)(2K+2)+1)$th to $(k(2K+2))$th element of
$\boldsymbol{A}_i\tilde{\boldsymbol{v}}_i^{t}+\boldsymbol{b}+
\frac{1}{\beta}\boldsymbol{\lambda}^{t-1}$, and
$\boldsymbol{d}_{K+1}^{t-1}$ is composed of the $(K(2K+2)+1)$th
to $(K(2K+2)+2MNK)$th element of
$\boldsymbol{A}_i\tilde{\boldsymbol{v}}_i^{t}+\boldsymbol{b}
+\frac{1}{\beta}\boldsymbol{\lambda}^{t-1}$.

From (\ref{ADMM-4}), we know that solving (\ref{ADMM-4})
amounts to minimizing each $f_i(\boldsymbol{w}_i)+
\frac{\beta}{2}\|\boldsymbol{w}_i-\boldsymbol{d}_i^{t-1}\|_2^2$,
that is
\begin{align}
\textstyle\boldsymbol{w}_i^{t}=\arg\min\limits_{\boldsymbol{w}_i}
\ f_i(\boldsymbol{w}_i)+\frac{\beta}{2}\|\boldsymbol{w}_i-
\boldsymbol{d}_i^{t-1}\|_2^2,
\ \forall i.
\label{ADMM-6}
\end{align}
The solution to (\ref{ADMM-6}), also referred to as the
proximal mapping of $f_i(\boldsymbol{w}_i)$, is given in
the following proposition.

\begin{proposition}
\label{proposition-2}
Consider the problem (\ref{ADMM-6}). Let
$\mathcal{C}_i\triangleq\mathcal{C}$, $1\leq i\leq K$, and
also let $\mathcal{C}_{K+1}\triangleq\mathcal{P}_1 \times
\cdots \times\mathcal{P}_M$, $i=K+1$. Then the
solution to (\ref{ADMM-6}) is given as
\begin{align}
\boldsymbol{w}_i^{t}=\left \{
\begin{array}{ll}
\boldsymbol{d}_i^{t-1},
& \text{if} \ \boldsymbol{d}_i^{t-1}\in\mathcal{C}_i,\\
\frac{\beta\boldsymbol{d}_i^{t-1}}{1+\beta}+
\frac{\emph{\text{Proj}}_{\mathcal{C}_i}\{\boldsymbol{d}_i^{t-1}\}}{1+\beta},
& \text{otherwise}.
\end{array}
\right.
\label{theorem2-2}
\end{align}
\end{proposition}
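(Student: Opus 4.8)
The plan is to use the convexity and differentiability of the objective established in the preceding proposition, so that a vanishing gradient fully characterizes the (unique) minimizer. Write $F(\boldsymbol{w}_i)\triangleq f_i(\boldsymbol{w}_i)+\frac{\beta}{2}\|\boldsymbol{w}_i-\boldsymbol{d}_i^{t-1}\|_2^2$. Since $\frac{\beta}{2}\|\cdot\|_2^2$ is strictly convex for $\beta>0$ and $f_i$ is convex, $F$ is strictly convex, so the first-order condition is both necessary and sufficient and the minimizer is unique. Using $\nabla f_i(\boldsymbol{w}_i)=\boldsymbol{w}_i-\text{Proj}_{\mathcal{C}_i}\{\boldsymbol{w}_i\}$ from the earlier proposition, the optimality condition is
\[
(\boldsymbol{w}_i-\text{Proj}_{\mathcal{C}_i}\{\boldsymbol{w}_i\})+\beta(\boldsymbol{w}_i-\boldsymbol{d}_i^{t-1})=\boldsymbol{0}.
\]
It then remains only to verify that each of the two candidate expressions in (\ref{theorem2-2}) satisfies this identity.

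The first case is immediate: if $\boldsymbol{d}_i^{t-1}\in\mathcal{C}_i$, then $\text{Proj}_{\mathcal{C}_i}\{\boldsymbol{d}_i^{t-1}\}=\boldsymbol{d}_i^{t-1}$, so substituting $\boldsymbol{w}_i=\boldsymbol{d}_i^{t-1}$ annihilates both bracketed terms, and by convexity this point is the minimizer (indeed $F(\boldsymbol{d}_i^{t-1})=0$). The substantive case is $\boldsymbol{d}_i^{t-1}\notin\mathcal{C}_i$. Set $\boldsymbol{p}\triangleq\text{Proj}_{\mathcal{C}_i}\{\boldsymbol{d}_i^{t-1}\}$ and let $\boldsymbol{w}_i^{\star}\triangleq\frac{\beta\boldsymbol{d}_i^{t-1}+\boldsymbol{p}}{1+\beta}$ denote the claimed solution. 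The crux is to evaluate $\text{Proj}_{\mathcal{C}_i}\{\boldsymbol{w}_i^{\star}\}$. Rewriting $\boldsymbol{w}_i^{\star}=\boldsymbol{p}+\frac{\beta}{1+\beta}(\boldsymbol{d}_i^{t-1}-\boldsymbol{p})$ shows that $\boldsymbol{w}_i^{\star}$ lies on the segment joining $\boldsymbol{p}$ to $\boldsymbol{d}_i^{t-1}$, with the coefficient $\frac{\beta}{1+\beta}\in[0,1]$.

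To compute this projection I would invoke the variational (obtuse-angle) characterization of the projection onto the closed convex set $\mathcal{C}_i$, namely $\langle\boldsymbol{d}_i^{t-1}-\boldsymbol{p},\,\boldsymbol{c}-\boldsymbol{p}\rangle\leq0$ for all $\boldsymbol{c}\in\mathcal{C}_i$. Scaling this inequality by the nonnegative factor $\frac{\beta}{1+\beta}$ yields $\langle\boldsymbol{w}_i^{\star}-\boldsymbol{p},\,\boldsymbol{c}-\boldsymbol{p}\rangle\leq0$ for all $\boldsymbol{c}\in\mathcal{C}_i$, and since $\boldsymbol{p}\in\mathcal{C}_i$, this is precisely the condition identifying $\boldsymbol{p}$ as $\text{Proj}_{\mathcal{C}_i}\{\boldsymbol{w}_i^{\star}\}$; hence $\text{Proj}_{\mathcal{C}_i}\{\boldsymbol{w}_i^{\star}\}=\boldsymbol{p}$. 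With this in hand the verification is routine arithmetic: $\boldsymbol{w}_i^{\star}-\boldsymbol{p}=\frac{\beta}{1+\beta}(\boldsymbol{d}_i^{t-1}-\boldsymbol{p})$ and $\boldsymbol{w}_i^{\star}-\boldsymbol{d}_i^{t-1}=\frac{1}{1+\beta}(\boldsymbol{p}-\boldsymbol{d}_i^{t-1})$, so the two terms in the optimality condition cancel exactly. The main obstacle is establishing the projection-invariance claim $\text{Proj}_{\mathcal{C}_i}\{\boldsymbol{w}_i^{\star}\}=\boldsymbol{p}$; everything rests on $\mathcal{C}_i$ being closed and convex, which holds both for the SOC $\mathcal{C}$ when $i\leq K$ and for the product of power balls $\mathcal{P}_1\times\cdots\times\mathcal{P}_M$ when $i=K+1$, so that the obtuse-angle inequality is available in each case.
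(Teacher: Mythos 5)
Your proof is correct and follows essentially the same route as the paper's: establish uniqueness from the strong convexity of the objective, then verify that each candidate annihilates the gradient, with the crux being that the projection of the claimed minimizer onto $\mathcal{C}_i$ coincides with $\text{Proj}_{\mathcal{C}_i}\{\boldsymbol{d}_i^{t-1}\}$ because the minimizer lies on the segment joining $\boldsymbol{d}_i^{t-1}$ to its projection. The only difference is that the paper merely asserts this segment--projection invariance in a parenthetical remark, whereas you actually derive it from the obtuse-angle variational characterization of projections onto closed convex sets --- a welcome tightening of the argument, but not a different proof.
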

\begin{proof}
See Appendix \ref{appendix-A-1}.
\end{proof}

\subsubsection{Solutions to the Projections}
First consider $\text{Proj}_{\mathcal{C}_i}\{\boldsymbol{d}_i^{t-1}\}$,
$i\leq K$. Note that $\mathcal{C}_i$ is an SOC. Let
$[d_i^{t-1}]_{\text{last}}$ denote the last element of
$\boldsymbol{d}_i^{t-1}$ and also let
$[\boldsymbol{d}_i^{t-1}]_{\text{rest}}$ denote the subvector
containing the remaining elements. Then the closed-form solution
to $\text{Proj}_{\mathcal{C}_i}\{\boldsymbol{d}_i^{t-1}\}$,
$i\leq K$, is given as (see \cite{FukushimaLuo02})
\begin{align}
\text{Proj}_{\mathcal{C}_i}\{\boldsymbol{d}_i^{t-1}\} =
\left\{\begin{array}{ll}
\boldsymbol{d}_i^{t-1}, &\ \text{if} \
\|[\boldsymbol{d}_i^{t-1}]_{\text{rest}}\|_2
\leq [d_i^{t-1}]_{\text{last}}, \\
\boldsymbol{0},  &\ \text{if} \
\|[\boldsymbol{d}_i^{t-1}]_{\text{rest}}\|_2
\leq -[d_i^{t-1}]_{\text{last}}, \\
\bar{\boldsymbol{d}_i},  &\ \text{otherwise},
\end{array} \right.
\label{projection-soc}
\end{align}
where
\begin{align}
&\textstyle [\bar{\boldsymbol{d}_i}]_{\text{rest}}\triangleq
\frac{1}{2}(1+\frac{[d_i^{t-1}]_{\text{last}}}
{\|[\boldsymbol{d}_i^{t-1}]_{\text{rest}}\|_2})
\cdot [\boldsymbol{d}_i^{t-1}]_{\text{rest}},
\nonumber\\
&\textstyle [\bar{\boldsymbol{d}_i}]_{\text{last}}=\frac{1}{2}
([d_i^{t-1}]_{\text{last}}+
\|[\boldsymbol{d}_i^{t-1}]_{\text{rest}}\|_2).
\end{align}
As for $\text{Proj}_{\mathcal{C}_{K+1}}\{\boldsymbol{d}_{K+1}^{t-1}\}$,
notice that this is a length-$2MKN$ vector which can be vertically
split into $M$ non-overlapping subvectors, each of which is of
length $2KN$. Then $m$th subvector of
$\text{Proj}_{\mathcal{C}_{K+1}}\{\boldsymbol{d}_{K+1}^{t-1}\}$
is denoted as
$[\text{Proj}_{\mathcal{C}_{K+1}}\{\boldsymbol{d}_{K+1}^{t-1}\}]_m$,
which is obtained through the following formula
\begin{align}
[\text{Proj}_{\mathcal{C}_{K+1}}\{\boldsymbol{d}_{K+1}^{t-1}\}]_m
=\left \{
\begin{array}{ll}
[\boldsymbol{d}_{K+1}^{t-1}]_m,
\ \text{if} \ \|[\boldsymbol{d}_{K+1}^{t-1}]_m\|_2\leq p_m, \\
\frac{p_m\cdot[\boldsymbol{d}_{K+1}^{t-1}]_m}{\|[\boldsymbol{d}_{K+1}^{t-1}]_m\|_2},
\ \text{otherwise}.
\end{array} \right.
\end{align}

\subsection{Computational Analysis}
\label{sec-limitation}
The number of FLOPs (floating-point operations) for solving
the $\boldsymbol{w}^{t}$-subproblem is in the order of
$O(K^2+MNK)$. Here, $O(K^2)$ accounts for the $K$ projections
onto the SOC, and $O(MNK)$ corresponds to the projection
onto the power constraint set. The major complexity of the
standard ADMM lies in solving the
$\tilde{\boldsymbol{v}}^t$-subproblem, which mainly involves
two matrix-vector multiplications. The first one is the
multiplication between $\boldsymbol{A}_i^T$ and
$\boldsymbol{D}_i(\boldsymbol{b}-\boldsymbol{w}^{t-1}+
\frac{1}{\beta}\boldsymbol{\lambda}^{t-1})$ and the other
one is between $(\boldsymbol{A}_i^T\boldsymbol{A}_i)^{-1}$
and $\boldsymbol{A}_i^T\boldsymbol{D}_i(\boldsymbol{b}-
\boldsymbol{w}^{t-1}+\frac{1}{\beta}\boldsymbol{\lambda}^{t-1})$.
The number of FLOPs for computing the first multiplication
is in the order of $O(MNK)$ (notice the sparsity in
$\boldsymbol{A}_i^T$). As for the second multiplication,
first recall that
\begin{align}
\textstyle \boldsymbol{A}_i^T\boldsymbol{A}_i=\boldsymbol{I}+
\sum_{k=1}^{K}\tilde{\boldsymbol{H}}_k^T\tilde{\boldsymbol{H}}_k
+e(s_c)\cdot\tilde{\boldsymbol{h}}_{i}
\tilde{\boldsymbol{h}}_{i}^T, \ i\leq K.
\label{complexity-mul}
\end{align}
which means that (using the Woodbury formula)
\begin{align}
(\boldsymbol{A}_i^T\boldsymbol{A}_i)^{-1}=&
\boldsymbol{I}-\bar{\boldsymbol{A}}_i(\boldsymbol{I}+
\bar{\boldsymbol{A}}_i^H\bar{\boldsymbol{A}}_i)^{-1}\bar{\boldsymbol{A}}_i^H
\nonumber\\
=&\boldsymbol{I}-\underbrace{\bar{\boldsymbol{A}}_i(\boldsymbol{I}+
\bar{\boldsymbol{A}}_i^H\bar{\boldsymbol{A}}_i)^{-1/2}}_{\triangleq \boldsymbol{A}_{i,\text{left}}}
\underbrace{(\boldsymbol{I}+\bar{\boldsymbol{A}}_i^H\bar{\boldsymbol{A}}_i)^{-1/2}
\bar{\boldsymbol{A}}_i^H}_{\triangleq \boldsymbol{A}_{i,\text{left}}^T}
\nonumber\\
=&\boldsymbol{I}-
\boldsymbol{A}_{i,\text{left}}\cdot\boldsymbol{A}_{i,\text{left}}^T
\label{complexity-mul2}
\end{align}
where $\bar{\boldsymbol{A}}_i\triangleq [\tilde{\boldsymbol{H}}_1^T \
\cdots \ \tilde{\boldsymbol{H}}_K^T \
\sqrt{e(s_c)}\tilde{\boldsymbol{h}}_{i}]\in\mathbb{R}^{2MN\times
2K+1}$. Using (\ref{complexity-mul2}), we see
that multiplying the vector $\boldsymbol{A}_i^T\boldsymbol{D}_i(\boldsymbol{b}-
\boldsymbol{w}^{t-1}+\frac{1}{\beta}\boldsymbol{\lambda}^{t-1})$
by $(\boldsymbol{A}_i^T\boldsymbol{A}_i)^{-1}$ has a complexity of $O(MNK)$.
Therefore, the total number of FLOPs for solving the
$\tilde{\boldsymbol{v}}^{t}$-subproblem is in the order of
$O(MNK^2)$. At last, the number of FLOPs for updating
$\bar{\boldsymbol{\lambda}}^{t}$ is in the order
of $O(MNK^2)$. To summarize, the per-iteration complexity of
the standard ADMM is dominated by $O(MNK^2)$.

For large-scale CF massive MIMO systems with large values
of $M$, $N$, and $K$, its computational complexity could
become prohibitively high. This motivates us to develop a
randomized ADMM algorithm that solves the max-min beamforming
problem in a more efficient way.

\section{Proposed Randomized ADMM Algorithm}
\label{sec-proposed} In this section, we propose a randomized
ADMM algorithm for solving problem (\ref{problem-linear-cons2}).
The proposed algorithm is summarized in Algorithm \ref{alg:2}.
There are several differences between the proposed algorithm
and the standard ADMM (\ref{ADMM-1}). Firstly, in the
$\tilde{\boldsymbol{v}}^{t}$-subproblem, only a subset of
$\tilde{\boldsymbol{v}}_i$s are updated. The update formula
of $\tilde{\boldsymbol{v}}_i^t$ is given in (\ref{ADMM-3-1}).
The number of $\tilde{\boldsymbol{v}}_i$s to be updated is
determined by a selection probability $\alpha$. As discussed
earlier, the main computational cost of the standard ADMM
lies in the $\tilde{\boldsymbol{v}}^{t}$-subproblem. In our
algorithm, updating a smaller number of
$\tilde{\boldsymbol{v}}_i$s can significantly reduce the
per-iteration computational cost. Empirically, setting a
small value of $\alpha$ (say $\alpha=0.05$) is good enough
to attain a decent convergence speed. As such, the proposed
randomized ADMM is computationally more efficient than the
standard ADMM. Due to the random selection, each
$\tilde{\boldsymbol{v}}_i$ is updated every $1/\alpha$
iterations (in expectation).

To maintain a balance between
the primal and the dual updates, the dual update is also
modified. In fact, merging (\ref{sec2-2-1}) and (\ref{sec2-2})
yields
\begin{align}
&\textstyle\boldsymbol{\lambda}^{t}=\boldsymbol{\lambda}^{t-1}+
\alpha\beta(\sum_{j=1}^K \boldsymbol{A}_j\tilde{\boldsymbol{v}}_j^{t}
+\boldsymbol{b}-\boldsymbol{w}^{t}),
\label{our-alg-1-1}
\end{align}
which is $\alpha$ times slower than that in the standard
ADMM.

The $\boldsymbol{w}^{t}$-subproblem is similar to that in
the standard ADMM, except that: (1) the quadratic term
$\|\sum_{j=1}^K\boldsymbol{A}_j\tilde{\boldsymbol{v}}_j^{t}+
\boldsymbol{b}-\boldsymbol{w}\|_2^2$ is multiplied by the
selection probability $\alpha$; (2) an extra proximal term
$\frac{\bar{\alpha}\beta}{2}\|\boldsymbol{w}_i-
\boldsymbol{w}_i^{t-1}\|_2^2$ is added to this subproblem.
Intuitively, the first change is made to accommodate the
modified update of $\boldsymbol{\lambda}^{t}$, i.e.,
(\ref{our-alg-1-1}). While the second change is made to
ensure the convergence of the algorithm. To understand this,
recall that the extra proximal term forces the solution
to the $\boldsymbol{w}^{t}$-subproblem to stay near to
$\boldsymbol{w}^{t-1}$. When $\alpha$ is small,
$\boldsymbol{w}^{t}$ should proceed more cautiously
(because only a portion of $\tilde{\boldsymbol{v}}_i$s are
updated). Hence an extra proximal term is added to the
$\boldsymbol{w}^{t}$-subproblem. Theoretically, the parameter
$\bar{\alpha}$ should be chosen such that
$\alpha\bar{\alpha}\geq(\alpha^{-2}-1)$, see
Theorem \ref{theorem-content-2}. However, Theorem
\ref{theorem-content-2} imposes an overly pessimistic
constraint of $\bar{\alpha}$. In practical implementations,
$\bar{\alpha}$ does not need to be set this large.

\begin{algorithm}
\caption{Proposed R-ADMM} \label{alg:2}
\begin{algorithmic}
\STATE{\textbf{Inputs}: the selection probability $\alpha$,
the algorithm parameter $\beta$ and $\bar{\alpha}$, and the
maximum number of iterations $\bar{t}$. All initial vectors
are set to $\boldsymbol{0}$ (optional)}.
\STATE{\textbf{While} $t\leq \bar{t}$
\textbf{do}}
\STATE{\textcircled{1} \ \textbf{Selection}:
Each $g_i$ (see (\ref{ADMM-2})) has a probability of $\alpha$
to be selected. In the $t$th iteration, the index set of the
selected $g_i$s is denoted as $\Lambda^{t}$.}
\STATE{\textcircled{2} \ \textbf{Solving the}
$\tilde{\boldsymbol{v}}^{t}$-\textbf{subproblem}: }
\begin{align}
&\left\{\begin{array}{ll}
\tilde{\boldsymbol{v}}_i^{t}=
\arg\min\limits_{\tilde{\boldsymbol{v}}_i} \
\textstyle g_i(\tilde{\boldsymbol{v}}_i;\boldsymbol{w}^{t-1},
\boldsymbol{\lambda}^{t-1}), \ i\in \Lambda^{t},
\\
\tilde{\boldsymbol{v}}_i^{t}=\tilde{\boldsymbol{v}}_i^{t-1},
\ i\notin \Lambda^{t}.
\end{array} \right.
\label{alg-2}
\end{align}
where $g_i(\tilde{\boldsymbol{v}}_i;\boldsymbol{w}^{t-1},
\boldsymbol{\lambda}^{t-1})\triangleq\frac{\beta}{2}
\|\boldsymbol{A}_i\tilde{\boldsymbol{v}}_i
+\boldsymbol{D}_i(\boldsymbol{b}-\boldsymbol{w}^{t-1}+
\frac{1}{\beta}\boldsymbol{\lambda}^{t-1})\|_2^2$ and the
solution to this subproblem is given in (\ref{ADMM-3-1}).
\STATE{\textcircled{3} \ \textbf{Solving the}
$\boldsymbol{w}^t$-\textbf{subproblem}:
\begin{align}
&\textstyle\boldsymbol{w}^{t}=\arg\min\limits_{\boldsymbol{w}}
f(\boldsymbol{w})+\langle\boldsymbol{\lambda}^{t-1},\sum_{j=1}^K
\boldsymbol{A}_j\tilde{\boldsymbol{v}}_j^{t}+\boldsymbol{b}-
\boldsymbol{w}\rangle+
\nonumber\\
&\textstyle\frac{\alpha\beta}{2}\|\sum_{j=1}^K
\boldsymbol{A}_j\tilde{\boldsymbol{v}}_j^{t}+\boldsymbol{b}-
\boldsymbol{w}\|_2^2
+\frac{\bar{\alpha}\beta}{2}
\|\boldsymbol{w}-\boldsymbol{w}^{t-1}\|_2^2, \ t<\bar{t},
\label{alg-1}
\\
&\textstyle\boldsymbol{w}^{\bar{t}}=
\arg\min\limits_{\boldsymbol{w}}
\ f(\boldsymbol{w})+\langle\boldsymbol{\lambda}^{t-1},
\sum_{j=1}^K\boldsymbol{A}_j\tilde{\boldsymbol{v}}_j^{\bar{t}}+
\boldsymbol{b}-\boldsymbol{w}\rangle+
\nonumber\\
&\textstyle\frac{\beta}{2}\|\sum_{j=1}^K\boldsymbol{A}_j
\tilde{\boldsymbol{v}}_j^{\bar{t}}+
\boldsymbol{b}-\boldsymbol{w}\|_2^2+
\frac{\hat{\alpha}\beta}{2}
\|\boldsymbol{w}-\boldsymbol{w}^{\bar{t}-1}\|_2^2,
\label{alg-1-1}
\end{align}
where $\hat{\alpha}\triangleq\alpha\bar{\alpha}$.
}
\STATE{\textcircled{4} \ \textbf{Updating the Lagrangian multiplier}: }
\begin{align}
&\textstyle\bar{\boldsymbol{\lambda}}^{t}=
\boldsymbol{\lambda}^{t-1}+
\beta(\sum_{j=1}^K \boldsymbol{A}_j\tilde{\boldsymbol{v}}_j^{t}
+\boldsymbol{b}-\boldsymbol{w}^{t}),
\label{sec2-2-1}
\\
&\boldsymbol{\lambda}^{t}=\boldsymbol{\lambda}^{t-1}+
\alpha(\bar{\boldsymbol{\lambda}}^{t}-\boldsymbol{\lambda}^{t-1}).
\label{sec2-2}
\end{align}
\par{\textbf{End While and Output}
$\tilde{\boldsymbol{v}}^{\bar{t}}$;}
\end{algorithmic}
\end{algorithm}

\subsection{Efficient Implementations and Computational Complexity}
\label{sec-proposed-sub1}
\subsubsection{$\boldsymbol{w}^{t}$-subproblem}
For this subproblem, first rewrite (\ref{alg-1}) as
\begin{align}
&\textstyle\boldsymbol{w}^{t}=
\nonumber\\
&\textstyle\arg\min\limits_{\boldsymbol{w}}
\ f(\boldsymbol{w})+\frac{(\alpha+\bar{\alpha})\beta}{2}
\|\boldsymbol{w}-(\frac{\alpha}{\alpha+
\bar{\alpha}}\boldsymbol{d}^{t-1}+
\frac{\bar{\alpha}}{\alpha+\bar{\alpha}}\boldsymbol{w}^{t-1})\|_2^2,
\nonumber\\
&\qquad\qquad \qquad t< \bar{t}-1,
\label{our-alg-1}
\end{align}
where $\boldsymbol{d}^{t-1}\triangleq\sum_{j=1}^K
\boldsymbol{A}_j\tilde{\boldsymbol{v}}_j^{t}+\boldsymbol{b}
+\frac{1}{\alpha\beta}\boldsymbol{\lambda}^{t-1},
\ t< \bar{t}-1$. Further using the separable structure
of $f$, we know that solving (\ref{our-alg-1}) amounts to
solving
\begin{align}
&\textstyle\boldsymbol{w}_i^{t}=
\nonumber\\
&\textstyle
\arg\min\limits_{\boldsymbol{w}_i}
\ f_i(\boldsymbol{w}_i)+\frac{(\alpha+\bar{\alpha})\beta}{2}
\|\boldsymbol{w}_i-(\frac{\alpha}{\alpha+\bar{\alpha}}
\boldsymbol{d}_i^{t-1}+\frac{\bar{\alpha}}{\alpha+
\bar{\alpha}}\boldsymbol{w}_i^{t-1})\|_2^2,
\nonumber\\
&\qquad\qquad \qquad 1\leq i\leq K+1,
\label{our-alg-2}
\end{align}
where $f_i(\boldsymbol{w}_i)$ is defined in (\ref{ADMM-4-1}).
The solution to (\ref{our-alg-2}) can be obtained from
Proposition \ref{proposition-2}. In the last iteration, the
$\boldsymbol{w}^{\bar{t}}$-update is changed to (\ref{alg-1-1}).
Such a modification is only for ease of analysis. In fact,
there is no need to perform (\ref{alg-1-1}) because the
output of Algorithm \ref{alg:2} is $\tilde{\boldsymbol{v}}^{\bar{t}}$,
which is already obtained before the
$\boldsymbol{w}^{\bar{t}}$-subproblem.

\subsubsection{$\tilde{\boldsymbol{v}}^{t}$-subproblem}
The update of $\tilde{\boldsymbol{v}}_i^{t}$, $1\leq i\leq K$,
is given in (\ref{ADMM-3-1}). From
(\ref{complexity-mul}), we know that
if we have already obtained $(\boldsymbol{I}+\sum_{k=1}^{K}
\tilde{\boldsymbol{H}}_k^T\tilde{\boldsymbol{H}}_k)^{-1}$,
the inverse of $\boldsymbol{A}_i^T\boldsymbol{A}_i$ can be
conveniently computed using the Woodbury
formula. Notably, the matrix $(\boldsymbol{I}+\sum_{k=1}^{K}
\tilde{\boldsymbol{H}}_k^T\tilde{\boldsymbol{H}}_k)^{-1}$
needs to be computed only once.

\subsubsection{$\bar{\boldsymbol{\lambda}}^{t}$-subproblem}
Note that the update of $\bar{\boldsymbol{\lambda}}^{t}$
involves calculating $\sum_{j=1}^K
\boldsymbol{A}_j\tilde{\boldsymbol{v}}_j^{t}$. In fact,
there is no need to compute every $\boldsymbol{A}_j
\tilde{\boldsymbol{v}}_j^{t}$ and then sum them together
within each iteration. To see this, suppose we maintain a
vector $\boldsymbol{\tau}^{t-1}=\sum_{j=1}^K
\boldsymbol{A}_j\tilde{\boldsymbol{v}}_j^{t-1}$. Recall
that in the $t$th iteration, only those
$\tilde{\boldsymbol{v}}_i^{t}$s, $i\in\Lambda^{t}$, are
updated. Thus, we have $\boldsymbol{\tau}^{t}=\textstyle
\boldsymbol{\tau}^{t-1}+\sum_{i\in\Lambda^{t}}
\boldsymbol{A}_i(\tilde{\boldsymbol{v}}_i^{t}-
\tilde{\boldsymbol{v}}_i^{t-1})$. As a result, only
$|\Lambda^{t}|$ matrix-vector multiplications need to be
performed.

\subsubsection{Computational Complexity}
\label{sec-proposed-complexity} Similar to the discussions
in Section \ref{sec-limitation}, it can be easily verified
that the per-iteration computational complexity of
Algorithm \ref{alg:2} is in the order of $O(\alpha MNK^2)$,
which is $\alpha\ll 1$ times that of the standard ADMM.

\section{Convergence Analysis}
\label{sec-convergence}
In this section, we prove the sublinear convergence rate
for the proposed randomized ADMM algorithm. Our main
results are summarized in the following theorem.

\newtheorem{theorem}{Theorem}
\begin{theorem}
\label{theorem-content-2} Let
$\{\boldsymbol{w}^{*},\tilde{\boldsymbol{v}}^*,
\boldsymbol{\lambda}^*\}$ denote a set of optimal primal-dual
solution to the linearly constrained problem
(\ref{problem-linear-cons2}). Suppose in Algorithm \ref{alg:2},
the selection probability is set to $\alpha$, and the maximum
number of iterations is set to $\bar{k}$. Furthermore, suppose
the parameter $\bar{\alpha}$ is chosen such that
\begin{align}
\alpha\bar{\alpha}\geq (\alpha^{-2}-1).
\label{theorem-1}
\end{align}
Then the sequence generated by Algorithm \ref{alg:2} satisfies
\begin{align}
&\textstyle \mathbb{E}\big[
|f(\vec{\boldsymbol{w}}^{\bar{t}})-f(\boldsymbol{w}^{*})|\big]
\leq \frac{\emph{\text{Const}}+\frac{1}{2\beta}\|\boldsymbol{\lambda}^{0}-
\vec{\boldsymbol{\lambda}}\|_2^2}{1+\alpha\cdot(\bar{t}-1)}
\label{theorem-2}
\end{align}
and
\begin{align}
&\textstyle \mathbb{E}\big[
\|\boldsymbol{A}\vec{\boldsymbol{v}}^{\bar{t}}
+\boldsymbol{b}-\vec{\boldsymbol{w}}^{\bar{t}}\|_2\big]
\leq \frac{\emph{\text{Const}}+\frac{1}{2\beta}\|\boldsymbol{\lambda}^{0}-
\vec{\boldsymbol{\lambda}}\|_2^2}{C\cdot(1+\alpha\cdot(\bar{t}-1))}
\label{theorem-3}
\end{align}
where the expectation is taken over all possible realizations due
to random selection of $\boldsymbol{\tilde{v}}^{t}$-subproblems,
$\boldsymbol{\lambda}^{0}$ is the initial vector of
$\boldsymbol{\lambda}$, \emph{Const} is a constant number, and
\begin{align}
&\textstyle\vec{\boldsymbol{w}}^{\bar{t}}
\triangleq\frac{\boldsymbol{w}^{\bar{t}}+\alpha\sum_{t=1}^{\bar{t}-1}
\boldsymbol{w}^{t}}{1+\alpha\cdot(\bar{t}-1)}, \
\vec{\boldsymbol{v}}^{\bar{t}}\triangleq
\frac{\tilde{\boldsymbol{v}}^{\bar{t}}+
\alpha\sum_{t=1}^{\bar{t}-1}
\tilde{\boldsymbol{v}}^{t}}{1+\alpha\cdot(\bar{t}-1)},
\nonumber\\
&\textstyle\vec{\boldsymbol{\lambda}}=2C\cdot
\frac{\boldsymbol{A}\vec{\boldsymbol{v}}^{\bar{t}}+
\boldsymbol{b}-\vec{\boldsymbol{w}}^{\bar{t}}}
{\|\boldsymbol{A}\vec{\boldsymbol{v}}^{\bar{t}}
+\boldsymbol{b}-\vec{\boldsymbol{w}}^{\bar{t}}\|_2}, \
C\triangleq\|\boldsymbol{\lambda}^*\|_2+\epsilon,
\end{align}
in which $\epsilon$ is a small positive scalar.
\end{theorem}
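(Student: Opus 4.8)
The plan is to cast the analysis in the variational-inequality (VI) framework standard for ADMM and to inject the randomization through conditional expectations. First I would record the first-order optimality conditions of the two subproblems. Since $f$ is differentiable with $\nabla f(\boldsymbol{x})=\boldsymbol{x}-\text{Proj}_{\mathcal{D}}\{\boldsymbol{x}\}$, the $\boldsymbol{w}^{t}$-update (\ref{alg-1}) gives, after substituting the combined dual step $\boldsymbol{\lambda}^{t}=\boldsymbol{\lambda}^{t-1}+\alpha\beta(\sum_{j}\boldsymbol{A}_j\tilde{\boldsymbol{v}}_j^{t}+\boldsymbol{b}-\boldsymbol{w}^{t})$ obtained from (\ref{sec2-2-1})--(\ref{sec2-2}), the clean identity $\nabla f(\boldsymbol{w}^{t})=\boldsymbol{\lambda}^{t}-\bar{\alpha}\beta(\boldsymbol{w}^{t}-\boldsymbol{w}^{t-1})$. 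For each selected block $i\in\Lambda^{t}$, the $\tilde{\boldsymbol{v}}_i^{t}$-update yields the normal equation $\boldsymbol{A}_i^{T}(\sum_{j}\boldsymbol{A}_j\tilde{\boldsymbol{v}}_j^{t}+\boldsymbol{b}-\boldsymbol{w}^{t-1}+\tfrac{1}{\beta}\boldsymbol{\lambda}^{t-1})=\boldsymbol{0}$, where the orthogonality $\boldsymbol{A}_i^{T}\boldsymbol{A}_{j}=\boldsymbol{0}$ ($i\neq j$) lets me treat each selected block independently.

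Next I would establish a single per-iteration inequality that simultaneously controls the objective gap and a dual-pairing term. Using convexity of $f$, $f(\boldsymbol{w}^{t})-f(\boldsymbol{w})\le\langle\nabla f(\boldsymbol{w}^{t}),\boldsymbol{w}^{t}-\boldsymbol{w}\rangle$, and substituting the identity for $\nabla f(\boldsymbol{w}^{t})$, I would combine the $\boldsymbol{w}$- and $\tilde{\boldsymbol{v}}$-optimality relations to reach a bound of the form $f(\boldsymbol{w}^{t})-f(\boldsymbol{w})+\langle\boldsymbol{\lambda},\boldsymbol{A}\tilde{\boldsymbol{v}}^{t}+\boldsymbol{b}-\boldsymbol{w}^{t}\rangle\le\Phi^{t-1}-\Phi^{t}+(\text{residual})$, valid for arbitrary $(\boldsymbol{w},\tilde{\boldsymbol{v}},\boldsymbol{\lambda})$, where $\Phi^{t}$ is a Lyapunov quantity assembled from $\tfrac{1}{2\beta}\|\boldsymbol{\lambda}^{t}-\boldsymbol{\lambda}\|_2^2$, $\tfrac{\bar{\alpha}\beta}{2}\|\boldsymbol{w}^{t}-\boldsymbol{w}\|_2^2$, and the block quadratics $\tfrac{\beta}{2}\|\boldsymbol{A}_i(\tilde{\boldsymbol{v}}_i^{t}-\tilde{\boldsymbol{v}}_i)\|_2^2$. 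The orthogonality is again essential, since it collapses $\|\sum_{j}\boldsymbol{A}_j\tilde{\boldsymbol{v}}_j\|_2^2$ into $\sum_{j}\|\boldsymbol{A}_j\tilde{\boldsymbol{v}}_j\|_2^2$ and keeps $\Phi^{t}$ separable across blocks.

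The main obstacle, and the step I would spend the most care on, is taking the conditional expectation over the random index set $\Lambda^{t}$. For $i\notin\Lambda^{t}$ one has $\tilde{\boldsymbol{v}}_i^{t}=\tilde{\boldsymbol{v}}_i^{t-1}$, so $\mathbb{E}[\tilde{\boldsymbol{v}}_i^{t}\mid\mathcal{F}^{t-1}]=\alpha\hat{\boldsymbol{v}}_i^{t}+(1-\alpha)\tilde{\boldsymbol{v}}_i^{t-1}$ with $\hat{\boldsymbol{v}}_i^{t}$ the full (deterministic) update; this mismatch between the partial step and the full step is exactly what produces the $\alpha$-weighting on the progress terms and leaves behind variance terms. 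I would show that the added proximal term $\tfrac{\bar{\alpha}\beta}{2}\|\boldsymbol{w}-\boldsymbol{w}^{t-1}\|_2^2$ together with the hypothesis $\alpha\bar{\alpha}\ge\alpha^{-2}-1$ in (\ref{theorem-1}) renders the residual quadratic form positive semidefinite, so that the variance terms are absorbed and the expected inequality reduces to $\mathbb{E}[f(\boldsymbol{w}^{t})-f(\boldsymbol{w})+\langle\boldsymbol{\lambda},\boldsymbol{A}\tilde{\boldsymbol{v}}^{t}+\boldsymbol{b}-\boldsymbol{w}^{t}\rangle]\le\mathbb{E}[\Phi^{t-1}-\Phi^{t}]$. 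The special last-iterate update (\ref{alg-1-1}) is what makes the telescoping boundary terms line up.

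Finally I would telescope the expected inequality over $t=1,\dots,\bar{t}$; the Lyapunov terms collapse to the initial $\boldsymbol{w}$- and $\tilde{\boldsymbol{v}}$-dependent quantities, collected into $\mathrm{Const}$, plus $\tfrac{1}{2\beta}\|\boldsymbol{\lambda}^{0}-\boldsymbol{\lambda}\|_2^2$, while the $\alpha$-weighted sum of iterates produces the denominator $1+\alpha(\bar{t}-1)$ and, by Jensen/convexity, lets me replace the weighted iterate sums by the averages $\vec{\boldsymbol{w}}^{\bar{t}}$ and $\vec{\boldsymbol{v}}^{\bar{t}}$. This yields a master inequality $\mathbb{E}[f(\vec{\boldsymbol{w}}^{\bar{t}})-f(\boldsymbol{w}^{*})+\langle\boldsymbol{\lambda},\boldsymbol{A}\vec{\boldsymbol{v}}^{\bar{t}}+\boldsymbol{b}-\vec{\boldsymbol{w}}^{\bar{t}}\rangle]\le\frac{\mathrm{Const}+\frac{1}{2\beta}\|\boldsymbol{\lambda}^{0}-\boldsymbol{\lambda}\|_2^2}{1+\alpha(\bar{t}-1)}$ holding for every $\boldsymbol{\lambda}$. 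Evaluating it at $\boldsymbol{\lambda}=\vec{\boldsymbol{\lambda}}$, the pairing term becomes $2C\|\boldsymbol{A}\vec{\boldsymbol{v}}^{\bar{t}}+\boldsymbol{b}-\vec{\boldsymbol{w}}^{\bar{t}}\|_2$; combining this with the saddle-point lower bound $f(\vec{\boldsymbol{w}}^{\bar{t}})-f(\boldsymbol{w}^{*})\ge-\|\boldsymbol{\lambda}^{*}\|_2\|\boldsymbol{A}\vec{\boldsymbol{v}}^{\bar{t}}+\boldsymbol{b}-\vec{\boldsymbol{w}}^{\bar{t}}\|_2$ and the fact that $2C-\|\boldsymbol{\lambda}^{*}\|_2\ge C$ (since $C=\|\boldsymbol{\lambda}^{*}\|_2+\epsilon$) isolates the residual norm and gives (\ref{theorem-3}); dropping the nonnegative pairing term gives the upper estimate in (\ref{theorem-2}), and feeding (\ref{theorem-3}) back into the same saddle-point inequality supplies the matching lower estimate.
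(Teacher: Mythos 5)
Your proposal is correct and follows essentially the same route as the paper's Appendix B: first-order optimality conditions for both subproblems, conditional expectation over the random index set producing the $\alpha$-weighting (with the hypothesis $\alpha\bar{\alpha}\geq\alpha^{-2}-1$ absorbing the extra quadratic terms via the parallelogram identity), telescoping, Jensen's inequality to pass to the weighted averages, the choice $\boldsymbol{\lambda}=\vec{\boldsymbol{\lambda}}$, and the saddle-point lower bound of Theorem \ref{theorem-content-3} to extract both (\ref{theorem-2}) and (\ref{theorem-3}). The only difference is organizational --- you isolate a per-iteration Lyapunov descent inequality before summing, whereas the paper aggregates the optimality relations over all $t$ first and bounds the resulting sum --- which does not change the substance of the argument.
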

\begin{proof}
See Appendix \ref{appendix-B}.
\end{proof}
\newtheorem{comment}{Comment}
\begin{comment}
Since (\ref{theorem-2}) and (\ref{theorem-3}) hold simultaneously,
$\{\vec{\boldsymbol{w}}^{t},\vec{\boldsymbol{v}}^{t}\}$ is guaranteed
to converge to the optimal solution of (\ref{problem-linear-cons2}).
\end{comment}
\begin{comment}
\label{comment-2}
Although (\ref{theorem-2}) (resp. (\ref{theorem-3})) is established
w.r.t. the weighted average of all past variables, i.e.,
$\vec{\boldsymbol{w}}^{\bar{t}}$ (resp. $\vec{\boldsymbol{v}}^{\bar{t}}$),
there is no need to compute them in practice
because such a time average is overly pessimistic. Instead, the
instantaneous output $\{\tilde{\boldsymbol{v}}^{\bar{t}},
\boldsymbol{w}^{\bar{t}}\}$ should be the output of the algorithm.
\end{comment}

\section{Extension To QoS-Aware Beamforming}
\label{sec-extension-QoS}
Consider the following QoS-aware beamforming problem:
\begin{align}
\textstyle
\mathop {\min }\limits_{\{\boldsymbol{v}_{k}\}_{1\leq k\leq K}} &
\textstyle \ \sum_{m=1}^M\sum_{k=1}^K ||\boldsymbol{v}_k[m]||_2^2
\nonumber\\
\text{s.t.} & \textstyle \
R_k(\{\boldsymbol{v}_{k'}\}_{1\leq k'\leq K})\geq s_c, \
\forall k,
\label{QoS-cons-beam}
\end{align}
where $s_c$ is a pre-specified parameter. Problem
(\ref{QoS-cons-beam}) aims to find a set of minimum-cost
beamforming vectors that fulfill the QoS requirements of all
users. Following the derivations in Section \ref{sec-linear},
we can see that the QoS-aware beamforming problem
(\ref{QoS-cons-beam}) admits the following formulation:
\begin{align}
\mathop {\min }\limits_{\{\tilde{\boldsymbol{v}}_i\}_{1\leq i\leq K}} &
\textstyle \ \sum_{i=1}^K \|\tilde{\boldsymbol{v}}_i\|_2^2
\nonumber\\
\text{s.t.} & \textstyle \
\boldsymbol{G}\tilde{\boldsymbol{v}}+
\boldsymbol{e}\in\mathcal{D}_{pm},
\label{our-alg-extension}
\end{align}
where $\boldsymbol{G}$ and $\boldsymbol{e}$ are constructed
similarly as $\boldsymbol{H}$ and $\boldsymbol{b}$ in
(\ref{feasibility-compact-1-1}), and $\mathcal{D}_{pm}
\triangleq \mathcal{C}\times\cdots\times \mathcal{C}$ is
the Cartesian product of $K$ $\mathcal{C}$s.
Introducing an auxiliary variable $\boldsymbol{w}$ to
(\ref{our-alg-extension}) yields the following problem:
\begin{align}
\mathop {\min }\limits_{\{\tilde{\boldsymbol{v}}_i\}_{1\leq i\leq K},
\ \boldsymbol{w}} &
\textstyle \ \sum_{i=1}^K \|\tilde{\boldsymbol{v}}_i\|_2^2
\nonumber\\
\text{s.t.} & \textstyle \ \boldsymbol{w}\in\mathcal{D}_{pm},
\nonumber\\
&\textstyle
\ \boldsymbol{G}\tilde{\boldsymbol{v}}+\boldsymbol{e}
=\boldsymbol{w}.
\label{our-alg-extension-2}
\end{align}
Applying ADMM for (\ref{our-alg-extension-2}) yields
\begin{align}
&\tilde{\boldsymbol{v}}^{t}=
\arg\min\limits_{\tilde{\boldsymbol{v}}} \
\textstyle(\sum_{i=1}^K
\|\tilde{\boldsymbol{v}}_i\|_2^2)+
\frac{\beta}{2}\|\boldsymbol{G}\tilde{\boldsymbol{v}}+
\boldsymbol{e}-\boldsymbol{w}^{t-1}
+\frac{1}{\sigma}\boldsymbol{\lambda}^{t-1}\|_2^2,
\nonumber\\
&\textstyle\boldsymbol{w}^{t}
=\arg\min\limits_{\boldsymbol{w}\in\mathcal{D}_{pm}}
\
\frac{\beta}{2}\|\boldsymbol{G}\tilde{\boldsymbol{v}}^{t}+
\boldsymbol{e}-\boldsymbol{w}+
\frac{1}{\beta}\boldsymbol{\lambda}^{t-1}\|_2^2,
\nonumber\\
&\textstyle\boldsymbol{\lambda}^{t}=\boldsymbol{\lambda}^{t-1}+
\beta(\boldsymbol{G}\tilde{\boldsymbol{v}}^{t}+
\boldsymbol{e}-\boldsymbol{w}^{t}).
\label{our-alg-extension-3}
\end{align}
Using the orthogonal structure inherent in $\boldsymbol{G}$,
the $\tilde{\boldsymbol{v}}^{t}$-subproblem can be decomposed
into $K$ independent subtasks, each of which only involves
$\tilde{\boldsymbol{v}}_i$. Moreover, the
$\boldsymbol{w}^{t}$-subproblem admits a closed-form
solution, that is, $\boldsymbol{w}^{t}=
\text{Proj}_{\mathcal{D}_{pm}}\{\boldsymbol{G}
\tilde{\boldsymbol{v}}^{t}+\boldsymbol{e}+
\frac{1}{\beta}\boldsymbol{\lambda}^{t-1}\}$.
For this reason, (\ref{our-alg-extension-3}) can be
similarly solved by the proposed Algorithm \ref{alg:2}.

\section{Relation to Existing Works}
\label{sec-relation}
\subsection{Relation to randomized ADMM}
Previous works such as \cite{SunLuo20,HongChang20,MihicZhu21}
have also developed randomized ADMM algorithms. However,
our proposed Algorithm \ref{alg:2} is different
from existing methods. First of all, methods in
\cite{SunLuo20,HongChang20,MihicZhu21} are developed for
problems of the following form:
\begin{align}
\textstyle\mathop {\min}\limits_{\tilde{\boldsymbol{x}}}  \ &
\textstyle
f(\tilde{\boldsymbol{x}})\triangleq g(\boldsymbol{x}_1,
\cdots,\boldsymbol{x}_K)+\sum_{k=1}^K h_k(\boldsymbol{x}_k),
\nonumber\\
\text{s.t.} \ & \boldsymbol{E}_{1}\boldsymbol{x}_1+\cdots+
\boldsymbol{E}_{K}\boldsymbol{x}_K=\boldsymbol{q}
\label{relation-existing-1}
\end{align}
where $\tilde{\boldsymbol{x}}\triangleq[\boldsymbol{x}_1;
\cdots;\boldsymbol{x}_K]\in\mathbb{R}^{KN}$, $\boldsymbol{x}_k
\in\mathbb{R}^{N}$, $1\leq k\leq K$, $\boldsymbol{E}_{k}\in
\mathbb{R}^{M\times N}$, $1\leq k\leq K$, is the given data
matrix, and $\boldsymbol{q}\in\mathbb{R}^M$ is a given data
vector. If we treat $\boldsymbol{w}$ and each $\tilde{\boldsymbol{v}}_j$
as independent blocks, then problem (\ref{problem-linear-cons2})
also falls into the form of (\ref{relation-existing-1}). The
major difference between our algorithm and those in
\cite{SunLuo20,HongChang20,MihicZhu21} is how
these block variables are updated. For example, the method
in \cite{HongChang20} randomly chooses a block variable
to update in each iteration, including both primal variables
(e.g. $\boldsymbol{w}$ and each $\tilde{\boldsymbol{v}}_j$) and
dual variables (e.g. $\lambda$). For our proposed algorithm,
the $\boldsymbol{w}$ and $\lambda$ variables are updated in
every iteration, while random selection is only applied to
the sub-blocks of the $\tilde{\boldsymbol{v}}$ variable. Our
design is beneficial for accelerating the convergence because
it reduces the randomness in the algorithm. Since our method
is essentially different from existing methods, the proof
techniques used in existing literatures can not be
straightforwardly applied to analyzing our algorithm.

\subsection{Relation to Uplink-Downlink Duality-based Methods}
\label{sec-relation-udd}
The UDD-based (uplink-downlink duality) approach
\cite{SchubertBoche04,HuangTan13,GongJordan09,YuLan07,
BjornsonJorswieck13,MirettiCavalcante24,MirettiCavalcante23,Nuzman07}
originates in a completely different perspective. By leveraging
the Lagrangian duality, the UDD-based approach transforms
the highly challenging downlink beamforming problem into an
equivalent uplink dual problem which has a favorable
separable structure. Due to this, the uplink dual problem
can be efficiently solved via alternating optimization.
However, when dealing with more involved settings such
as multi-cell networks or cell-free networks with per-AP
or per-antenna power constraints, the uplink dual problem
becomes a highly coupled SDP which also calls for advanced
optimization methods. Take \cite{YuLan07} as an example.
This work provides an alternative solution to the
feasibility-checking problem (\ref{feasibility-problem}).
Therefore the method in \cite{YuLan07} can be used to
solve the MMB problem (\ref{maxmin-beam}) when combined with the
bisection search. Specifically, \cite{YuLan07} aims to
solve the following problem:
\begin{align}
\textbf{\text{Downlink-primal}}: \
& \mathop {\min }_{\{\boldsymbol{v}_k[m]\}_{k,m},
\ \alpha} \ \alpha \nonumber\\
\text{s.t.} & \textstyle \
\sum_{k=1}^K ||\boldsymbol{v}_k[m]||_2^2\leq \alpha \cdot p_m,
\ \forall m,
\nonumber\\
& \textstyle \
R_k(\{\boldsymbol{v}_{k'}\}_{k'=1}^K)\geq s_c
\label{ref1-downlink-perAP}
\end{align}
where $\alpha$ is an optimization variable. Note that
problem (\ref{ref1-downlink-perAP}) is similar to problem
(\ref{feasibility-problem}), except that (\ref{ref1-downlink-perAP})
imposes a ``soft'' per-AP power constraint by introducing
the variable $\alpha$. To explain the equivalence between
problem (\ref{ref1-downlink-perAP}) and (\ref{feasibility-problem}),
let the solution to (\ref{ref1-downlink-perAP}) be denoted
as $\{\alpha^*,\{\boldsymbol{v}^*_k[m]\}_{k,m}\}$. Clearly,
if $\alpha^*\leq 1$, then $\{\boldsymbol{v}^*_k[m]\}_{k,m}$
is also a feasible solution to problem (\ref{feasibility-problem}).
On the other hand, if $\alpha^*\geq 1$, then problem
(\ref{feasibility-problem}) dose not have a feasible
solution. Therefore the solution to (\ref{ref1-downlink-perAP})
can help determine the feasibility of the problem
(\ref{feasibility-problem}) in our paper. To solve
(\ref{ref1-downlink-perAP}), references [13-15] proposed
to convert it into an uplink dual problem given as:
\begin{align}
&\textbf{\text{Uplink-dual}}:
\nonumber\\
&\textstyle \mathop {\max }\limits_{\boldsymbol{Q}, \
\{\lambda_k\}_{k=1}^K}  \ \textstyle
\sum_{k=1}^K \lambda_k\cdot \sigma^2
\nonumber\\
\text{s.t.} & \ \textstyle
\boldsymbol{Q}+\sum\limits_{k'=1}^K\lambda_{k'}\cdot\boldsymbol{h}_{k'}\boldsymbol{h}_{k'}^H
\succeq \left(1+\frac{1}{s_{\text{fix}}}\right)
\lambda_i\cdot\boldsymbol{h}_{k}\boldsymbol{h}_{k}^H, \ 1\leq k \leq K,
\nonumber\\
& \text{tr}\{\boldsymbol{Q}\boldsymbol{\Phi}\}\leq \text{tr}\{\boldsymbol{\Phi}\},
\ \boldsymbol{Q} \ \text{diagonal}, \  \boldsymbol{Q} \succeq \boldsymbol{0}.
\label{ref1-uplink}
\end{align}
where $\{\lambda_k\}_{k=1}^K$ is a set of Lagrangian dual
variables, $\boldsymbol{Q}\triangleq
\text{diag}\{q_1,\cdots,q_1,q_2,\cdots,q_2,\cdots,q_M,\cdots,q_M\}
\in\mathbb{R}^{MN\times MN}$ is a variable diagonal matrix, and
$\boldsymbol{\Phi}\triangleq \frac{1}{N}
\cdot\text{diag}\{p_1,\cdots,p_1,p_2,\cdots,p_2,\cdots,
p_M,\cdots,p_M\}\in\mathbb{R}^{MN\times MN}$. After (\ref{ref1-uplink})
is solved, the solution to (\ref{ref1-downlink-perAP}) can
be accordingly determined. Clearly, problem (\ref{ref1-uplink})
is a highly coupled SDP since the variables
$\{\lambda_k\}_{k=1}^K$ and $\boldsymbol{Q}$ appear in all
the constraints. For this reason, solving the uplink problem
is no easier than solving the original downlink problem.

\section{Simulation Results}
\label{sec-simulation} In this section, we conduct numerical
experiments to demonstrate the superiority of the proposed R-ADMM
algorithm. We compare our algorithm with the standard ADMM (S-ADMM),
i.e., Algorithm (\ref{ADMM-1}), the DRS (also referred to
as the averaged alternating reflection algorithm (AARA))
\cite{BauschkeMoursi17,WangWang21} as well as the uplink-downlink
duality-based method (UDDm) \cite{YuLan07,BjornsonJorswieck13}.
Although HSDE \cite{ShiZhang15} is also a first-order method,
in the large-scale regime, it requires computing the inverse
of a data matrix that is too large to be inverted. As such,
HSDE is not included in the comparison. All the simulations are
carried out on a desktop with a 3.7 GHz quad-core Intel Core i9
processor and $128$ GB of RAM.

\subsection{Experimental Settings}
Firstly, we assume that the $M=16$ APs and $K$ users are
independently and uniformly distributed within a square
area of size $500$m$\times$$500$m. The spatially correlated
channel vector between the $m$th AP and the $k$th user, namely,
$\boldsymbol{h}_k[m]\in\mathbb{C}^{N}$, is generated as
\begin{align}
\boldsymbol{h}_k[m]=(\varrho_{k}[m])^{1/2}\boldsymbol{g}_k[m]
\end{align}
where $\varrho_{k}[m]\in\mathbb{R}$ denotes the large-scale
fading, and $\boldsymbol{g}_k[m]\in\mathbb{C}^{N}$ denotes
the small-scale fading vector between the $m$th AP and the
$k$th user. The large-scale fading coefficient $\varrho_{k}[m]$
is modeled (in dB) as \cite{OzdoganBjornson19}
\begin{align}
\varrho_{k}[m] = -34.53 - 38\text{log}_{10}(d_m[k]/1m)+\chi_{m}[k]
\end{align}
where $d_m[k]$ is the distance between the $m$th AP and the
$k$th user, and $\chi_{m}[k]\thicksim \mathcal{N}(0,100)$
represents shadow fading. We consider a spatially correlated
small-scale fading scenario where the small-scale fading
vector is randomly generated according to $\boldsymbol{g}_k[m]\thicksim
\mathcal{CN}(\boldsymbol{0},\boldsymbol{R}_k[m])$, where
$\boldsymbol{R}_k[m]\in\mathbb{C}^{N\times N}$ is the positive
semi-definite spatial correlation matrix generated according
to \cite{BjornsonHoydis17}. At last, the noise power $\sigma_k$
(in dBm) is determined via
\begin{align}
\sigma_k=-174+10\text{log}_{10}(\text{bandwidth})
\end{align}
where the bandwidth is set to $20$MHz in our simulations.

\begin{figure}[!htbp]
    \centering
    \includegraphics[height=4.5cm,width=6cm]{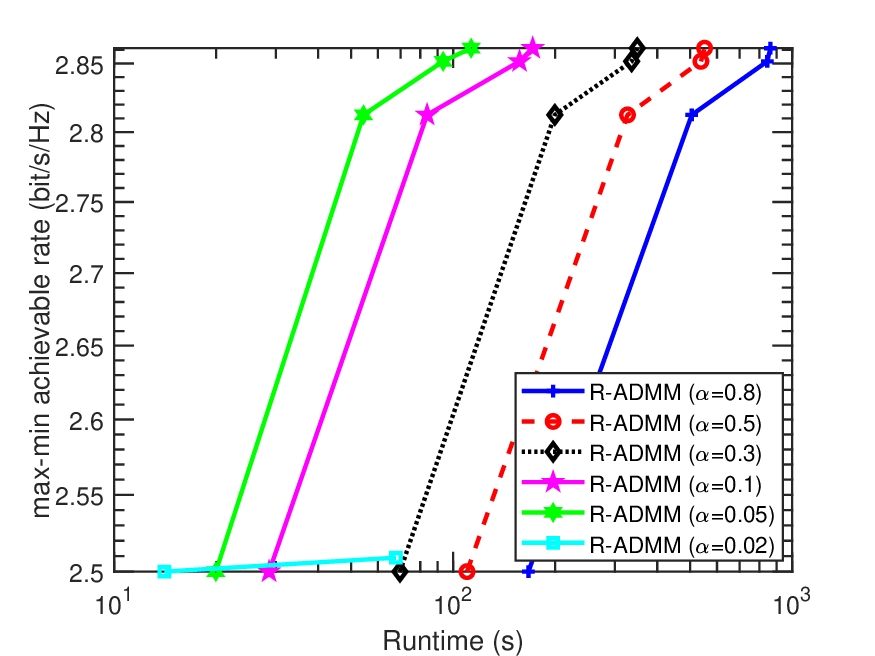}
    \caption{Performance of R-ADMM with different sampling rates}
    \label{fig1}
\end{figure}

\begin{figure}[!htbp]
    \centering
    \includegraphics[height=4.5cm,width=6cm]{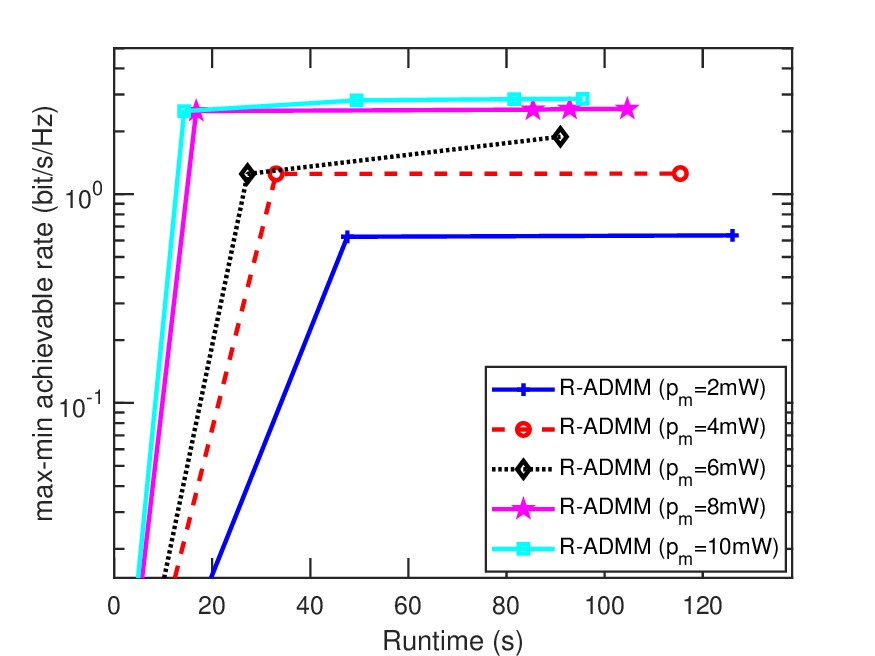}
    \caption{Performance of R-ADMM with different transmit powers}
    \label{fig2}
\end{figure}

To evaluate the performance of respective algorithms, we run
the bisection algorithm, namely, Algorithm \ref{alg:bsm},
with $[s_{\text{min}},s_{\text{max}}]$ initialized
as $[0,10]$ and $s_{\text{ter}}$ set to $0.01$. For such a
setting, the bisection search should be performed $10$ times to
arrive at the $0.01$ accuracy. When running the bisection
algorithm, it is critical to correctly identify the feasibility
or infeasibility of problem (\ref{feasibility-problem}).
To do so, the stopping criterion for each algorithm is set as
follows: If the optimality gap, defined as $\text{opg}^t\triangleq
\|\tilde{\boldsymbol{v}}^t -\tilde{\boldsymbol{v}}^{t-1}\|_2$,
decreases to $10^{-10}$, then the algorithm terminates; otherwise
each algorithm continues until it reaches the maximum number of
iterations, which is set $5000$ in our simulations. After the
algorithm stops, the feasibility or infeasibility of problem
(\ref{feasibility-problem}) is confirmed using the objective
function value attained by the algorithm. Finally, it should be
noted that the pre-processing task of all algorithms are the
same. Therefore the time consumption of pre-processing is not
included.

\begin{figure*}[!htbp]
    \centering
    \subfigure[]
    {\includegraphics[height=4.5cm,width=6cm]{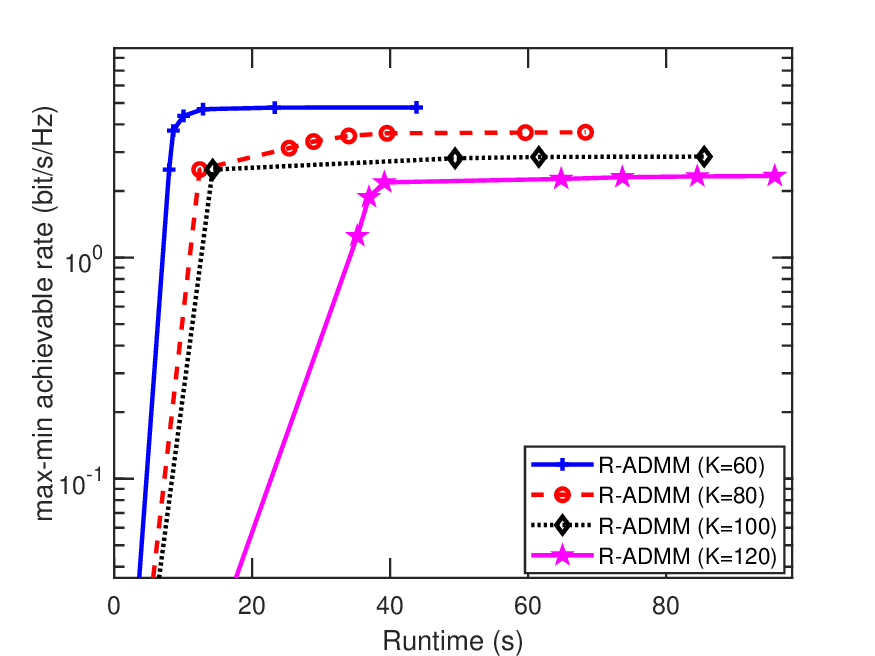}} \hfil
    \subfigure[]
    {\includegraphics[height=4.5cm,width=6cm]{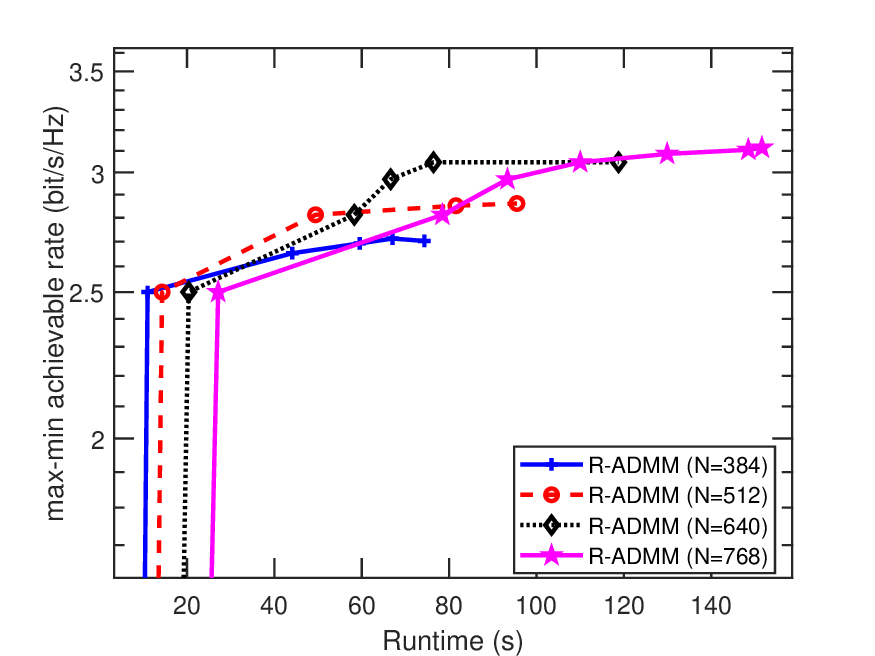}} \hfil
    \caption{Scalability of R-ADMM. Left: Performance of R-ADMM with
    different $K$; Right: Performance of R-ADMM with different $N$;}
    \label{fig3}
\end{figure*}


\subsection{Experimental Results}
First, we investigate the performance of the proposed R-ADMM
under different sampling rates. The number of antennas at each
AP is set to $36$, and the number of users is set to $100$.
Besides, the per-AP transmit power $p_m$, $1\leq m\leq M$,
is fixed as $10$mW. The parameter $\bar{\alpha}$ is fixed
as $0.01$. Fig. \ref{fig1} plots the convergence behavior
of the proposed algorithm as a function of the runtime,
where the sampling rate $\alpha$ varies from $0.02$ to $0.8$.
Note that the max-min achievable rate is a monotonically
increasing function of the runtime.

From Fig. \ref{fig1}, we see that the proposed R-ADMM under
different sampling rate choices, except for $\alpha=0.02$,
converges to the same achievable rate, which is around $2.86$bit/s/Hz.
When $\alpha=0.02$, the proposed algorithm converge to a smaller
achievable rate, namely, around $2.51$bit/s/Hz. The reason is that for
a very small value of $\alpha$, the R-ADMM algorithm would
require more than $5000$ iterations to converge. Since, in our
experiments, the maximum number of iterations is set to $5000$,
the R-ADMM may be terminated before its convergence, in which
case the feasible case could be incorrectly identified as an
infeasible case. Therefore it converges to a smaller achievable
rate. From Fig. \ref{fig1}, we see that setting $\alpha=0.05$
leads to the best computational efficiency, i.e. the R-ADMM
with $\alpha=0.05$ requires the least amount of time to
converge. This is because, for a smaller value of $\alpha$,
although R-ADMM requires more iterations to converge, its
per-iteration complexity is also lower since fewer
subproblems need to be solved at each iteration.

Fig. \ref{fig2} plots the convergence behavior of the proposed
algorithm under different maximum transmit powers, where we
set $\alpha=0.05$. In this experiment, the number of antennas
at each AP is fixed as $36$ and the number of users is set to
$100$. We see that, as expected, increasing the transmit
power results in a higher achievable rate. Fig. \ref{fig3}$(a)$
plots the convergence behavior of the proposed algorithm
when the number of users $K$ varies. In this experiment,
the number of antennas at each AP is fixed as $36$ and the
maximum per-AP transmit power $p_m$ is set to $10$mW. The
sampling rate is fixed as $\alpha=0.05$. We can see that a
smaller number of users leads to a higher achievable rate.
It is also observed that the required runtime scales nearly
linearly with the number of users $K$. Fig. \ref{fig3}$(b)$
plots the convergence behavior of the proposed algorithm
when the number of antennas at the AP varies.

\begin{figure}[!htbp]
    \centering
    \includegraphics[height=4.5cm,width=6cm]{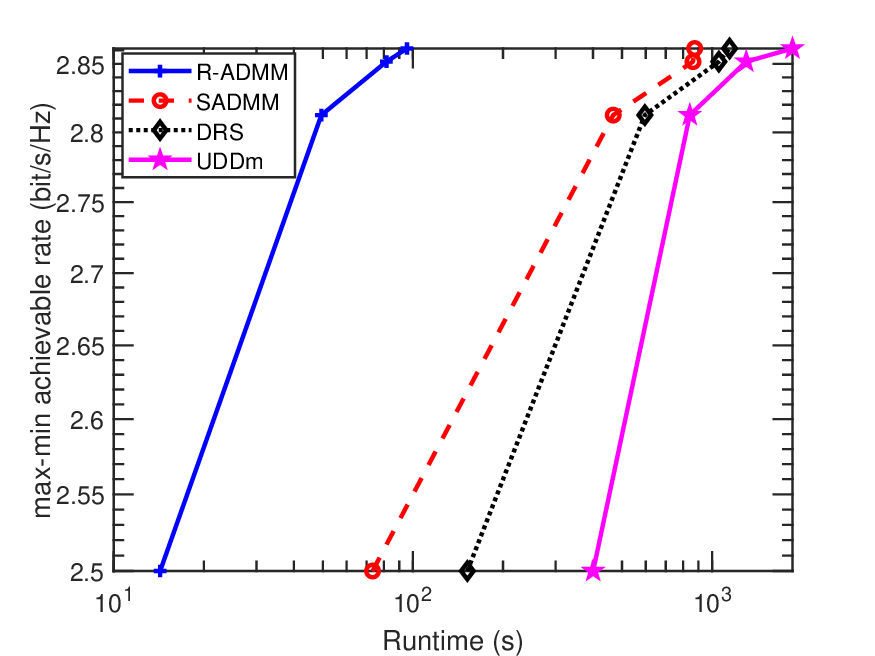}
    \color{blue}\caption{ Computational efficiency of respective algorithms}
    \label{fig4}
\end{figure}

Lastly, we compare our proposed algorithm with
the SADMM, the DRS \cite{WangWang21} as well as the UDDm
(uplink-downlink duality-based method) \cite{YuLan07}.
Note that the UDDm proposed in \cite{YuLan07} is introduced
in Section \ref{sec-relation-udd}. The resulting SDP
(\ref{ref1-uplink}) is solved via the interior-point
method \cite{YuLan07}.
Fig. \ref{fig4} plots the convergence behavior of
respective algorithms, where the number of antennas at
each AP is fixed as $36$ and the number of users is set
to $100$, and the maximum per-AP transmit power is set
to $p_m=10$mW, $1\leq m\leq M$. The sampling rate for
the proposed R-ADMM is set to $\alpha=0.05$. The
parameters of competing algorithms are tuned to achieve
the best performance. From Fig. \ref{fig4}, we can see
that all algorithms converge to the same value. This
means that they have correctly identified the feasibility
or infeasibility of the problem (\ref{feasibility-problem}).
We see that the runtime required by the proposed R-ADMM
to converge is far less than the time required by the
SADMM, the DRS as well as the UDDm. Our simulation
results suggest that the number of iterations required
by R-ADMM with $\alpha=0.05$ is only two or three times
that of SADMM. Nevertheless, since the per-iteration
computational complexity of R-ADMM is approximately
$1/10$ of that of the SADMM, it thus takes R-ADMM
much less time to converge as compared with the
SADMM. We see that DRS performs worse than SADMM. This
is because the performance of DRS is critically dependent
on the geometry of the feasibility-checking problem. If
the intersection of two sets has a large intersection
area, then the DRS can quickly converge to a point
in the intersection area. On the other hand, if the
intersection area is small, the efficiency of DRS incurs
a substantial amount of degradation. At last, it is also
observed that the UDDm takes a larger amount of time to
converge than other competing algorithms. This is
mainly attributed to the fact that the interior-point
method incurs an excessively high computational
complexity, especially for large-scale problems.

\section{Conclusions}
\label{sec-conclusion} In this paper, we proposed a randomized
ADMM algorithm to deal with large-scale max-min beamforming
problems for CF massive MIMO systems. The proposed method is
based on a novel formulation that converts the feasibility-checking
problem into a linearly constrained optimization problem.
Using the orthogonality inherent in the linear constraint,
the most computationally intensive subproblem in our ADMM
algorithm is decomposed into a number of subtasks. In each
iteration, only a small number of subtasks need to be solved,
thus leading to a much lower per-iteration computational
complexity. The proposed algorithm is proved to possess an
$O(1/\bar{t})$ convergence rate, which is in the same order
as its deterministic counterpart. Numerical results show that
the proposed algorithm is significantly more advantageous than
state-of-the-art methods in terms of computational efficiency.

\appendices

\section{Proof of Proposition \ref{proposition-2}}
\label{appendix-A-1}
First, notice that the objective function in (\ref{ADMM-6}) is
strongly convex, which implies that its solution
$\boldsymbol{w}_i^{t}$ is unique. If $\boldsymbol{d}_i^{t-1}
\in\mathcal{C}_i$, then setting $\boldsymbol{w}_i=
\boldsymbol{d}_i^{t-1}$ results in the minimum value of
the objective function in (\ref{ADMM-6}). Thus,
$\boldsymbol{w}_i^{t}=\boldsymbol{d}_i^{t-1}$ if
$\boldsymbol{d}_i^{t-1}\in\mathcal{C}_i$. Now, if
$\boldsymbol{d}_i^{t-1}\notin\mathcal{C}_i$, set
$\boldsymbol{w}_i=\frac{\beta\boldsymbol{d}_i^{t-1}}{1+\beta}+
\frac{\text{Proj}_{\mathcal{C}_i}\{\boldsymbol{d}_i^{t-1}\}}{1+\beta}$.
We see that the gradient of the objective function is equal to
$\boldsymbol{0}$, that is,
\begin{align}
&\underbrace{\boldsymbol{w}_i
-\text{Proj}_{\mathcal{C}_i}\{\boldsymbol{w}_i\}}_{=\nabla f_i}
+\beta(\boldsymbol{w}_i-\boldsymbol{d}_i^{t-1})
\nonumber\\
&\textstyle\overset{(a)}{=}
\text{Proj}_{\mathcal{C}_i}\{\boldsymbol{d}_i^{t-1}\}-
\text{Proj}_{\mathcal{C}_i}\{\boldsymbol{w}_i\}
\overset{(b)}{=}\boldsymbol{0},
\end{align}
where $(a)$ is because $\boldsymbol{w}_i^{t}+
\beta(\boldsymbol{w}_i^{t}-\boldsymbol{d}_i^{t-1})=
\text{Proj}_{\mathcal{C}_i}\{\boldsymbol{d}_i^{t-1}\}$, and
$(b)$ is because $\text{Proj}_{\mathcal{C}_i}
\{\boldsymbol{w}_i\}=\text{Proj}_{\mathcal{C}_i}
\{\boldsymbol{d}_i^{t-1}\}$ (recall that
$\boldsymbol{w}_i$ is a point lying in the segment
between $\boldsymbol{d}_i^{t-1}$ and $\text{Proj}_{\mathcal{C}_i}
\{\boldsymbol{d}_i^{t-1}\}$, so its projection onto $\mathcal{C}_i$
is equal to $\text{Proj}_{\mathcal{C}_i}\{\boldsymbol{d}_i^{t}\}$).
Since the solution to (\ref{ADMM-6}) is unique, there is
only one point that enables $\boldsymbol{0}$ gradient of
the objective function. Therefore it holds $\boldsymbol{w}_i^{t}
=\frac{\beta\boldsymbol{d}_i^{t-1}}{1+\beta}+
\frac{\text{Proj}_{\mathcal{C}_i}\{\boldsymbol{d}_i^{t-1}\}}{1+\beta}$
when $\boldsymbol{d}_i^{t-1}\notin\mathcal{C}_i$.

\section{Proof of Theorem \ref{theorem-content-2} }
\label{appendix-B}
Consider the $\boldsymbol{w}^{t}$-subproblem, $t<\bar{t}$. Setting
the gradient of the objective function to $\boldsymbol{0}$ yields
\begin{align}
&\textstyle\boldsymbol{0}=\nabla f(\boldsymbol{w}^{t})+
\alpha\beta(\underbrace{\textstyle\boldsymbol{w}^{t}-
\sum_{j=1}^K \boldsymbol{A}_j\tilde{\boldsymbol{v}}_j^{t}
+\boldsymbol{b}}_{\overset{(\ref{sec2-2-1})}{=}
\frac{1}{\beta}(\boldsymbol{\lambda}^{t-1}-
\bar{\boldsymbol{\lambda}}^{t})\overset{(\ref{sec2-2})}{=}
\frac{1}{\alpha\beta}(\boldsymbol{\lambda}^{t-1}-
\boldsymbol{\lambda}^{t})}+
\frac{1}{\alpha\beta}\boldsymbol{\lambda}^{t-1})
\nonumber\\
&\quad+\bar{\alpha}\beta(\boldsymbol{w}^{t}-\boldsymbol{w}^{t-1})
\nonumber\\
&\textstyle=\nabla f(\boldsymbol{w}^{t})-\boldsymbol{\lambda}^{t}
+\bar{\alpha}\beta(\boldsymbol{w}^{t}-\boldsymbol{w}^{t-1}).
\label{convergence-1}
\end{align}
Similarly, for $t=\bar{t}$ we have
\begin{align}
&\textstyle\boldsymbol{0}
\textstyle=\nabla f(\boldsymbol{w}^{\bar{t}})-
\bar{\boldsymbol{\lambda}}^{\bar{t}}+
\alpha\bar{\alpha}\beta(\boldsymbol{w}^{\bar{t}}-
\boldsymbol{w}^{\bar{t}-1}).
\label{convergence-2}
\end{align}
Multiplying $\boldsymbol{w}^{*}-\boldsymbol{w}^{t}$ (resp.
$\boldsymbol{w}^{*}-\boldsymbol{w}^{\bar{t}}$) to both
sides of (\ref{convergence-1}) (resp. (\ref{convergence-2}))
yields
\begin{align}
&\textstyle\boldsymbol{0}=(\boldsymbol{w}^{*}
-\boldsymbol{w}^{t})^T(\nabla f(\boldsymbol{w}^{t})-
\boldsymbol{\lambda}^{t})
+\bar{\alpha}\beta\cdot\Xi(\boldsymbol{w}^{t}), \ t<\bar{t},
\nonumber\\
&\textstyle\boldsymbol{0}=(\boldsymbol{w}^{*}
-\boldsymbol{w}^{\bar{t}})^T(\nabla f(\boldsymbol{w}^{\bar{t}})-
\bar{\boldsymbol{\lambda}}^{\bar{t}})
+\alpha\bar{\alpha}\beta\cdot\Xi(\boldsymbol{w}^{\bar{t}}),
\label{convergence-3}
\end{align}
where $\Xi(\boldsymbol{w}^{t})\triangleq(\boldsymbol{w}^{*}
-\boldsymbol{w}^{t})^T(\boldsymbol{w}^{t}-\boldsymbol{w}^{t-1})$,
$1\leq t\leq \bar{t}$. Summing up the above equalities yields
\begin{align}
&\textstyle\boldsymbol{0}=(\boldsymbol{w}^{*}
-\boldsymbol{w}^{\bar{t}})^T(\nabla f(\boldsymbol{w}^{\bar{t}})-
\bar{\boldsymbol{\lambda}}^{\bar{t}})+\sum_{t=1}^{\bar{t}}
\alpha\bar{\alpha}\beta\cdot\Xi(\boldsymbol{w}^{t})
\nonumber\\
&\textstyle+\alpha\sum_{t=1}^{\bar{t}-1}
(\boldsymbol{w}^{*}-\boldsymbol{w}^{t})^T
(\nabla f(\boldsymbol{w}^{t})-\boldsymbol{\lambda}^{t}).
\label{convergence-4}
\end{align}
Then consider the $\tilde{\boldsymbol{v}}^{t}$-subproblem,
$t<\bar{t}$. For $i\in \Lambda^{t}$ we have
\begin{align}
\textstyle \nabla g_i(\tilde{\boldsymbol{v}}_i^{t})\triangleq
\boldsymbol{A}_i^T(\boldsymbol{A}_i\tilde{\boldsymbol{v}}_i^{t}
+\boldsymbol{D}_i(\boldsymbol{b}-\boldsymbol{w}^{t-1}+
\frac{1}{\beta}\boldsymbol{\lambda}^{t-1}))=\boldsymbol{0},
\nonumber
\end{align}
which can be compactly written as
\begin{align}
&\textstyle \boldsymbol{r}^{t}\odot\nabla
g(\tilde{\boldsymbol{v}}^{t})
\triangleq
\nonumber\\
&\textstyle
[\big(r_1^{t}\cdot\nabla g_1(\tilde{\boldsymbol{v}}_1^{t})\big)^T \
\cdots \
\big(r_{K}^{t}\cdot\nabla g_{K}(\tilde{\boldsymbol{v}}_{K}^{t})\big)^T]
=\boldsymbol{0},
\label{convergence-6}
\end{align}
where $\boldsymbol{r}^{t}$ is a $0/1$ random binary vector
of length $K$. The $i$th element of $\boldsymbol{r}^{t}$,
namely, $r_i^{t}$, equals to $1$ if $i\in \Lambda^{t}$
($g_i$ is selected). As such, $r_i^{t}$ has a probability
of $\alpha$ (resp. $1-\alpha$) to be $1$ (resp. $0$). Define
$\mathbb{E}_{\boldsymbol{r}^{t}}^{\{\boldsymbol{r}^{t-1}\}}[\cdot]$
as the expectation taken w.r.t. $\boldsymbol{r}^{t}$ while
conditioned on $\{\boldsymbol{r}^{t-1}\}$, where
$\{\boldsymbol{r}^{t-1}\}$ is short for
$\{\boldsymbol{r}^{t'}\}_{t'=1}^{t-1}$. Multiplying
$\boldsymbol{r}^{t}\odot(\tilde{\boldsymbol{v}}^*-
\tilde{\boldsymbol{v}}^{t})$ to both sides of
(\ref{convergence-6}) yields
\begin{align}
&\boldsymbol{0}=
\mathbb{E}_{\boldsymbol{r}^{t}}^{\{\boldsymbol{r}^{t-1}\}}
\big[(\boldsymbol{r}^{t}\odot(\tilde{\boldsymbol{v}}^*-
\tilde{\boldsymbol{v}}^{t}))^T
(\boldsymbol{r}^{t}\odot\nabla g(\tilde{\boldsymbol{v}}^{t}))\big]
\nonumber\\
&=\textstyle\mathbb{E}_{\boldsymbol{r}^{t}}^{\{\boldsymbol{r}^{t-1}\}}
\big[\sum_{i=1}^{K}\frac{r_i^{t}}{\beta}(\tilde{\boldsymbol{v}}_i^*-
\tilde{\boldsymbol{v}}_i^{t})^T\boldsymbol{A}_i^T\boldsymbol{\lambda}^{t-1}
\nonumber\\
&\qquad\qquad\quad+\underbrace{r_i^{t}
(\tilde{\boldsymbol{v}}_i^*-\tilde{\boldsymbol{v}}_i^{t})^T
\boldsymbol{A}_i^T(\boldsymbol{A}_i\tilde{\boldsymbol{v}}_i^{t}+
\boldsymbol{b}-
\boldsymbol{w}^{t-1})}_{\triangleq [(\ref{convergence-7})-4]}\big]
\nonumber\\
&\overset{(a)}{=}\textstyle\sum_{i=1}^{K}\frac{\alpha}{\beta}
(\tilde{\boldsymbol{v}}_i^*-
\tilde{\boldsymbol{v}}_i^{t-1})^T\boldsymbol{A}_i^T\boldsymbol{\lambda}^{t-1}
\nonumber\\
&\textstyle+\mathbb{E}_{\boldsymbol{r}^{t}}^{\{\boldsymbol{r}^{t-1}\}}
[\sum_{i=1}^{K}\frac{r_i^{t}}{\beta}(\tilde{\boldsymbol{v}}^{t-1}-
\tilde{\boldsymbol{v}}_i^{t})^T\boldsymbol{A}_i^T\boldsymbol{\lambda}^{t-1}
+[(\ref{convergence-7})-4]]
\nonumber\\
&\overset{(b)}{=}\textstyle
\underbrace{\textstyle\sum_{i=1}^{K}\frac{\alpha-1}{\beta}
(\tilde{\boldsymbol{v}}_i^*-\tilde{\boldsymbol{v}}_i^{t-1})^T
\boldsymbol{A}_i^T\boldsymbol{\lambda}^{t-1}}_{\triangleq [(\ref{convergence-7})-1]}
\nonumber\\
&\textstyle
+\underbrace{\textstyle
\mathbb{E}_{\boldsymbol{r}^{t}}^{\{\boldsymbol{r}^{t-1}\}}
\big[\sum_{i=1}^{K}\frac{1}{\beta}(\tilde{\boldsymbol{v}}_i^{*}-
\tilde{\boldsymbol{v}}_i^{t})^T\boldsymbol{A}_i^T
\boldsymbol{\lambda}^{t-1}\big]}_{\triangleq [(\ref{convergence-7})-2]}
+\mathbb{E}_{\boldsymbol{r}^{t}}^{\{\boldsymbol{r}^{t-1}\}}
\{[(\ref{convergence-7})-4]\}
\nonumber\\
&=\textstyle[(\ref{convergence-7})-1]+[(\ref{convergence-7})-2]
\nonumber\\
&+\textstyle\mathbb{E}_{\boldsymbol{r}^{t}}^{\{\boldsymbol{r}^{t-1}\}}
\Big[\sum_{i=1}^{K}r_i^{t}(\tilde{\boldsymbol{v}}_i^*
-\tilde{\boldsymbol{v}}_i^{t-1})^T\underbrace{\boldsymbol{A}_i^T
(\boldsymbol{A}_i\tilde{\boldsymbol{v}}_i^{t-1}+\boldsymbol{b}-
\boldsymbol{w}^{t-1})}_{\triangleq \boldsymbol{\chi}_i^{t-1}}
\nonumber\\
&+r_i^{t}(\tilde{\boldsymbol{v}}_i^{t-1}-\tilde{\boldsymbol{v}}_i^{t})^T
\boldsymbol{\chi}_i^{t-1}+r_i^{t}(\tilde{\boldsymbol{v}}_i^*-
\tilde{\boldsymbol{v}}_i^{t})^T
\boldsymbol{A}_i^T\boldsymbol{A}_i(\tilde{\boldsymbol{v}}_i^{t}-
\tilde{\boldsymbol{v}}_i^{t-1})\Big]
\nonumber\\
&\overset{(c)}{=}\textstyle[(\ref{convergence-7})-1]+
[(\ref{convergence-7})-2]+(\alpha-1)\sum\limits_{i=1}^{K}
(\tilde{\boldsymbol{v}}_i^*-\tilde{\boldsymbol{v}}_i^{t-1})^T
\boldsymbol{\chi}_i^{t-1}+
\nonumber\\
&\textstyle\underbrace{\textstyle
\mathbb{E}_{\boldsymbol{r}^{t}}^{\{\boldsymbol{r}^{t-1}\}}
\big[\sum\limits_{i=1}^K(\tilde{\boldsymbol{v}}_i^*-
\tilde{\boldsymbol{v}}_i^{t})^T\boldsymbol{\chi}_i^{t-1}+
r_i^{t}(\tilde{\boldsymbol{v}}_i^*-\tilde{\boldsymbol{v}}_i^{t})^T
\boldsymbol{A}_i^T\boldsymbol{A}_i(\tilde{\boldsymbol{v}}_i^{t}-
\tilde{\boldsymbol{v}}_i^{t-1})\big]}_{\triangleq [(\ref{convergence-7})-3]}
\nonumber\\
&\overset{(d)}{=}\textstyle[(\ref{convergence-7})-1]
+(\alpha-1)\sum_{i=1}^{K}(\tilde{\boldsymbol{v}}_i^*-
\tilde{\boldsymbol{v}}_i^{t-1})^T
\boldsymbol{\chi}_i^{t-1}+
\nonumber\\
&\quad\textstyle
\mathbb{E}_{\boldsymbol{r}^{t}}^{\{\boldsymbol{r}^{t-1}\}}
\big[\sum_{i=1}^{K}\frac{1}{\beta}(\tilde{\boldsymbol{v}}_i^{*}-
\tilde{\boldsymbol{v}}_i^{t})^T\boldsymbol{A}_i^T
\Big(\boldsymbol{\lambda}^{t}+(1-\alpha)\beta
(\sum\limits_{j=1}^K \boldsymbol{A}_{j}\tilde{\boldsymbol{v}}_{j}^{t}
\nonumber\\
&\textstyle
+\boldsymbol{b}-\boldsymbol{w}^{t})\Big)
+(\tilde{\boldsymbol{v}}_i^{*}-\tilde{\boldsymbol{v}}_i^{t})^T
\boldsymbol{A}_i^T(\boldsymbol{w}^{t}-\boldsymbol{w}^{t-1})\big],
\label{convergence-7}
\end{align}
where $(a)$ is because
$\mathbb{E}_{\boldsymbol{r}^{t}}^{\{\boldsymbol{r}^{t-1}\}}[r_i^{t}]
=\alpha$, and $(b)$ is because (recall that $r_i^{t}$ is $0/1$
binary and $\tilde{\boldsymbol{v}}_i^{t}=
\tilde{\boldsymbol{v}}_i^{t-1}$ when $r_i^{t}=0$)
\begin{align}
&\textstyle\mathbb{E}_{\boldsymbol{r}^{t}}^{\{\boldsymbol{r}^{t-1}\}}
[\sum_{i=1}^{K}\frac{r_i^{t}}{\beta}(\tilde{\boldsymbol{v}}_i^{t-1}-
\tilde{\boldsymbol{v}}_i^{t})^T\boldsymbol{A}_i^T
\boldsymbol{\lambda}^{t-1}]
\nonumber\\
&\textstyle
=\mathbb{E}_{\boldsymbol{r}^{t}}^{\{\boldsymbol{r}^{t-1}\}}
[\sum_{i=1}^{K}\frac{1}{\beta}(\tilde{\boldsymbol{v}}_i^{t-1}-
\tilde{\boldsymbol{v}}_i^{t})^T\boldsymbol{A}_i^T
\boldsymbol{\lambda}^{t-1}],
\label{convergence-8}
\end{align}
$(c)$ is derived with a similar logic as $(b)$, and $(d)$ is
because
\begin{align}
&[(\ref{convergence-7})-2]+[(\ref{convergence-7})-3]
\nonumber\\
&=\textstyle\mathbb{E}_{\boldsymbol{r}^{t}}^{\{\boldsymbol{r}^{t-1}\}}
\big[\sum_{i=1}^{K}\frac{1}{\beta}(\tilde{\boldsymbol{v}}_i^{*}-
\tilde{\boldsymbol{v}}_i^{t})^T\boldsymbol{A}_i^T(\bar{\boldsymbol{\lambda}}^{t}
-\bar{\boldsymbol{\lambda}}^{t}+\boldsymbol{\lambda}^{t-1})\big]
\nonumber\\
&\quad+[(\ref{convergence-7})-3]
\nonumber\\
&\textstyle\overset{(\ref{sec2-2-1})}{=}\mathbb{E}_{\boldsymbol{r}^{t}}^{\{\boldsymbol{r}^{t-1}\}}
\big[\sum_{i=1}^{K}\frac{1}{\beta}(\tilde{\boldsymbol{v}}_i^{*}-
\tilde{\boldsymbol{v}}_i^{t})^T(\boldsymbol{A}_i^T\bar{\boldsymbol{\lambda}}^{t}
\nonumber\\
&\qquad\qquad\quad\textstyle-\beta\boldsymbol{A}_i^T(\sum_{j=1}^K \boldsymbol{A}_{j}\tilde{\boldsymbol{v}}_{j}^{t}
+\boldsymbol{b}-\boldsymbol{w}^{t}))\big]+[(\ref{convergence-7})-3]
\nonumber\\
&\textstyle\overset{(e)}{=}\mathbb{E}_{\boldsymbol{r}^{t}}^{\{\boldsymbol{r}^{t-1}\}}
\big[\sum_{i=1}^{K}(\tilde{\boldsymbol{v}}_i^{*}-
\tilde{\boldsymbol{v}}_i^{t})^T(\frac{1}{\beta}\boldsymbol{A}_i^T\bar{\boldsymbol{\lambda}}^{t}
-\boldsymbol{A}_i^T(\boldsymbol{A}_i\tilde{\boldsymbol{v}}_i^{t}
+\boldsymbol{b}-
\nonumber\\
&\quad\boldsymbol{w}^{t}))\big]+[(\ref{convergence-7})-3]
\nonumber\\
&\textstyle=\mathbb{E}_{\boldsymbol{r}^{t}}^{\{\boldsymbol{r}^{t-1}\}}
\Big[\sum_{i=1}^{K}(\tilde{\boldsymbol{v}}_i^{*}-
\tilde{\boldsymbol{v}}_i^{t})^T\Big(\frac{1}{\beta}\boldsymbol{A}_i^T\bar{\boldsymbol{\lambda}}^{t}
\nonumber\\
&\qquad\quad\textstyle-(1-r_i^{t})\boldsymbol{A}_i^T\boldsymbol{A}_i(\tilde{\boldsymbol{v}}_i^{t}-
\tilde{\boldsymbol{v}}_i^{t-1})+\boldsymbol{A}_i^T(\boldsymbol{w}^{t}-\boldsymbol{w}^{t-1})\Big)\Big]
\nonumber\\
&\textstyle\overset{(f)}{=}\mathbb{E}_{\boldsymbol{r}^{t}}^{\{\boldsymbol{r}^{t-1}\}}
\Big[\sum_{i=1}^{K}(\tilde{\boldsymbol{v}}_i^{*}-
\tilde{\boldsymbol{v}}_i^{t})^T\boldsymbol{A}_i^T\Big(\frac{1}{\beta}\Big(\boldsymbol{\lambda}^{t}+
\nonumber\\
&\quad\textstyle(1-\alpha)\beta(\sum\limits_{j=1}^K \boldsymbol{A}_{j}\tilde{\boldsymbol{v}}_{j}^{t}
+\boldsymbol{b}-\boldsymbol{w}^{t})\Big)+(\boldsymbol{w}^{t}-\boldsymbol{w}^{t-1})\Big)\Big],
\label{convergence-9}
\end{align}
in which $(e)$ is because $\boldsymbol{A}_i^T\boldsymbol{A}_{j}=
\boldsymbol{0}$, $i\neq j$, $(f)$ is obtained
by noticing $\mathbb{E}_{\boldsymbol{r}^{t}}^{\{\boldsymbol{r}^{t-1}\}}
\big[\sum_{i=1}^{K}(1-r_i^{t})(\tilde{\boldsymbol{v}}_i^*-
\tilde{\boldsymbol{v}}_i^{t})^T\boldsymbol{A}_i^T
\boldsymbol{A}_i(\tilde{\boldsymbol{v}}_i^{t}-
\tilde{\boldsymbol{v}}_i^{t-1})\big]=0$ (recall that $r_i^{t}$ is $0/1$
binary and $\tilde{\boldsymbol{v}}_i^{t}=\tilde{\boldsymbol{v}}_i^{t-1}$
when $r_i^{t}=0$) as well as
\begin{align}
&\textstyle\bar{\boldsymbol{\lambda}}^{t}\overset{(\ref{sec2-2})}{=}
\boldsymbol{\lambda}^{t}+(1-\alpha^{-1})(\boldsymbol{\lambda}^{t-1}
-\boldsymbol{\lambda}^{t})
\nonumber\\
&\textstyle\overset{(\ref{sec2-2})}{=}\boldsymbol{\lambda}^{t}-
(1-\alpha^{-1})(\alpha\beta(\sum_{j=1}^K \boldsymbol{A}_j\tilde{\boldsymbol{v}}_j^{t}
+\boldsymbol{b}-\boldsymbol{w}^{t})).
\label{convergence-10}
\end{align}
Define
\begin{align}
&\textstyle M^{t}\triangleq\sum\limits_{i=1}^{K}(\tilde{\boldsymbol{v}}_i^*-
\tilde{\boldsymbol{v}}_i^{t})^T\boldsymbol{A}_i^T\boldsymbol{\lambda}^{t}
=(\tilde{\boldsymbol{v}}^*-\tilde{\boldsymbol{v}}^{t})^T
\boldsymbol{A}^T\boldsymbol{\lambda}^{t}, \ 1\leq t\leq \bar{t},
\nonumber\\
&\textstyle V^{t}\triangleq\beta\sum_{i=1}^{K}(\tilde{\boldsymbol{v}}_i^*-
\tilde{\boldsymbol{v}}_i^{t})^T
\boldsymbol{A}_i^T(\sum_{j=1}^K\boldsymbol{A}_{j}\tilde{\boldsymbol{v}}_{j}^{t}+
\boldsymbol{b}-\boldsymbol{w}^{t})
\nonumber\\
&\quad\textstyle=\beta(\tilde{\boldsymbol{v}}^*-\tilde{\boldsymbol{v}}^{t})^T
\boldsymbol{A}^T(\sum_{j=1}^K\boldsymbol{A}_{j}\tilde{\boldsymbol{v}}_{j}^{t}+
\boldsymbol{b}-\boldsymbol{w}^{t}), \ 1\leq t\leq \bar{t},
\nonumber\\
&\textstyle Y^{t}\triangleq\sum_{i=1}^K\beta(\tilde{\boldsymbol{v}}_i^{*}-
\tilde{\boldsymbol{v}}_i^{t})^T
\boldsymbol{A}_i^T(\boldsymbol{w}^{t}-\boldsymbol{w}^{t-1})
\nonumber\\
&\textstyle
\quad=\beta(\tilde{\boldsymbol{v}}^{*}-\tilde{\boldsymbol{v}}^{t})^T
\boldsymbol{A}^T(\boldsymbol{w}^{t}-\boldsymbol{w}^{t-1}), \ 1\leq t\leq \bar{t}.
\label{convergence-11}
\end{align}
Then (\ref{convergence-7}) can be compactly written as
\begin{align}
&\boldsymbol{0}=(\alpha-1)(\textstyle M^{t-1}+V^{t-1})
+\mathbb{E}_{\boldsymbol{r}^{t}}^{\{\boldsymbol{r}^{t-1}\}}
\big[M^{t}+(1-\alpha)V^{t}+Y^{t}\big].
\nonumber
\end{align}
Taking a full expectation for the above yields
\begin{align}
&\boldsymbol{0}=\mathbb{E}_{\{\boldsymbol{r}^{t}\}}
\big[(\alpha-1)(\textstyle M^{t-1}+V^{t-1})+M^{t}+(1-\alpha)V^{t}
+Y^{t}\big].
\nonumber
\end{align}
Summing up the above equation for all $t$, we obtain
\begin{align}
&\textstyle\boldsymbol{0}=(\alpha-1)(\textstyle M^{0}+V^0)
\nonumber\\
&\quad\textstyle+\mathbb{E}_{\{\boldsymbol{r}^{\bar{t}}\}}
\big[M^{\bar{t}}+(1-\alpha)V^{\bar{t}}+\alpha\sum_{t=1}^{\bar{t}-1}M^{t}
+\sum_{t=1}^{\bar{t}}Y^{t}\big]
\nonumber\\
&\textstyle\overset{(a)}{=}
\text{Const}
+\mathbb{E}_{\{\boldsymbol{r}^{\bar{t}}\}}
\big[(\tilde{\boldsymbol{v}}^*-\tilde{\boldsymbol{v}}^{\bar{t}})^T
\boldsymbol{A}^T\bar{\boldsymbol{\lambda}}^{\bar{t}}+
\alpha\sum\limits_{t=1}^{\bar{t}-1}M^{t}+\sum\limits_{t=1}^{\bar{t}}Y^{t}\big],
\label{convergence-13}
\end{align}
where $\text{Const}\triangleq(\alpha-1)(\textstyle M^{0}+V^0)$,
$(a)$ is because
\begin{align}
&\textstyle M^{\bar{t}}+(1-\alpha)V^{\bar{t}}
\overset{(\ref{sec2-2-1})}{=}
\textstyle(\tilde{\boldsymbol{v}}^*-
\tilde{\boldsymbol{v}}^{\bar{t}})^T
\boldsymbol{A}^T\big(\boldsymbol{\lambda}^{\bar{t}}+
(1-\alpha)(\bar{\boldsymbol{\lambda}}^{\bar{t}}-
\boldsymbol{\lambda}^{\bar{t}-1})\big)
\nonumber\\
&\overset{(a)}{=}\textstyle(\tilde{\boldsymbol{v}}^*-
\tilde{\boldsymbol{v}}^{\bar{t}})^T
\boldsymbol{A}^T\bar{\boldsymbol{\lambda}}^{\bar{t}},
\label{convergence-14}
\end{align}
where $(a)$ is because $\boldsymbol{\lambda}^{\bar{t}}
=\boldsymbol{\lambda}^{\bar{t}-1}+\alpha(\bar{\boldsymbol{\lambda}}^{\bar{t}}
-\boldsymbol{\lambda}^{\bar{t}-1})$. Using (\ref{sec2-2-1})
and (\ref{sec2-2}), we can also deduce that
\begin{align}
&\textstyle\boldsymbol{0}=(\bar{\boldsymbol{\lambda}}^{\bar{t}}-
\boldsymbol{\lambda})^T
\textstyle\big(\beta^{-1}(\boldsymbol{\lambda}^{\bar{t}-1}-
\bar{\boldsymbol{\lambda}}^{\bar{t}})+
(\sum_{j=1}^K \boldsymbol{A}_j\tilde{\boldsymbol{v}}_j^{\bar{t}}
+\boldsymbol{b}-\boldsymbol{w}^{\bar{t}})\big),
\nonumber\\
&\textstyle\boldsymbol{0}=\alpha(\boldsymbol{\lambda}^{t}-
\boldsymbol{\lambda})^T\textstyle((\alpha\beta)^{-1}
(\boldsymbol{\lambda}^{t-1}-\boldsymbol{\lambda}^{t})+
(\sum\limits_{j=1}^K \boldsymbol{A}_j\tilde{\boldsymbol{v}}_j^{t}
+\boldsymbol{b}-\boldsymbol{w}^{t})),
\nonumber\\
&\textstyle \qquad 1\leq t\leq \bar{t}-1.
\label{convergence-15}
\end{align}
Adding (\ref{convergence-15}) and (\ref{convergence-4}) to
(\ref{convergence-13}) yields
\begin{align}
&0=\textstyle \text{Const}+
\underbrace{\textstyle\mathbb{E}_{\{\boldsymbol{r}^{\bar{t}}\}}
\Big[\frac{1}{\beta}(\bar{\boldsymbol{\lambda}}^{\bar{t}}
-\boldsymbol{\lambda})^T(\boldsymbol{\lambda}^{\bar{t}-1}
-\bar{\boldsymbol{\lambda}}^{\bar{t}})+}_{[(\ref{convergence-16})-1]}
\nonumber\\
&\underbrace{\textstyle\sum\limits_{t=1}^{\bar{t}-1}\frac{1}{\beta}
(\boldsymbol{\lambda}^{t}
-\boldsymbol{\lambda})^T(\boldsymbol{\lambda}^{t-1}
-\boldsymbol{\lambda}^{t})+\sum\limits_{t=1}^{\bar{t}}\big(Y^{t}+
\alpha\bar{\alpha}\beta\cdot\Xi(\boldsymbol{w}^{t})\big)
\Big]}_{[(\ref{convergence-16})-1]}+
\nonumber\\
&\textstyle\mathbb{E}_{\{\boldsymbol{r}^{\bar{t}}\}}
\big[\underbrace{\textstyle(\tilde{\boldsymbol{v}}^*-
\tilde{\boldsymbol{v}}^{\bar{t}})^T
\boldsymbol{A}^T\bar{\boldsymbol{\lambda}}^{\bar{t}}+
(\bar{\boldsymbol{\lambda}}^{\bar{t}}-\boldsymbol{\lambda})^T
\big(\sum\limits_{j=1}^K \boldsymbol{A}_j\tilde{\boldsymbol{v}}_j^{\bar{t}}
+\boldsymbol{b}-\boldsymbol{w}^{\bar{t}}\big)}_{[(\ref{convergence-16})-2]}
\nonumber\\
&\underbrace{+\textstyle(\boldsymbol{w}^{*}-\boldsymbol{w}^{\bar{t}})^T
(\nabla f(\boldsymbol{w}^{\bar{t}})-
\bar{\boldsymbol{\lambda}}^{\bar{t}})}_{[(\ref{convergence-16})-2]}
\big]+
\nonumber\\
&\textstyle\mathbb{E}_{\{\boldsymbol{r}^{\bar{t}}\}}
\Big[\underbrace{\textstyle\alpha\sum_{t=1}^{\bar{t}-1}
\Big(M^{t}+(\boldsymbol{\lambda}^{t}-\boldsymbol{\lambda})^T\big(\sum_{j=1}^K
\boldsymbol{A}_j\tilde{\boldsymbol{v}}_j^{t}
+\boldsymbol{b}-\boldsymbol{w}^{t}\big)}_{[(\ref{convergence-16})-3]}
\nonumber\\
&\underbrace{+(\boldsymbol{w}^{*}-\boldsymbol{w}^{t})^T
(\nabla f(\boldsymbol{w}^{t})-
\boldsymbol{\lambda}^{t})\Big)}_{[(\ref{convergence-16})-3]}\Big]
\label{convergence-16}
\end{align}
Regarding the term $[(\ref{convergence-16})-1]$, we have
(see Appendix \ref{appendix-C})
\begin{align}
&\textstyle\mathbb{E}_{\{\boldsymbol{r}^{\bar{t}}\}}
\big[[(\ref{convergence-16})-1]\big]\leq -\frac{1}{2\beta}
\big(\|\bar{\boldsymbol{\lambda}}^{\bar{t}}-\boldsymbol{\lambda}\|_2^2
+\|\boldsymbol{\lambda}^{\bar{t}-1}-\bar{\boldsymbol{\lambda}}^{\bar{t}}\|_2^2
\nonumber\\
&\textstyle-\|\boldsymbol{\lambda}^{\bar{t}-1}-\boldsymbol{\lambda}\|_2^2\big)
-\frac{1}{2\beta}\sum_{t=1}^{\bar{t}-1}
(\|\boldsymbol{\lambda}^{t}-\boldsymbol{\lambda}\|_2^2
+\|\boldsymbol{\lambda}^{t-1}-\boldsymbol{\lambda}^{t}\|_2^2
\nonumber\\
&\textstyle
-\|\boldsymbol{\lambda}^{t-1}-\boldsymbol{\lambda}\|_2^2)
-\frac{\alpha\bar{\alpha}\beta\cdot}{2}
\sum_{t=1}^{\bar{t}}(\|\boldsymbol{w}^{t}-\boldsymbol{w}^*\|_2^2
+\|\boldsymbol{w}^{t-1}-\boldsymbol{w}^{t}\|_2^2
\nonumber\\
&\textstyle
-\|\boldsymbol{w}^{t-1}-\boldsymbol{w}^*\|_2^2)
+\frac{1}{2\beta}\|\boldsymbol{\lambda}^{\bar{t}-1}-
\bar{\boldsymbol{\lambda}}^{\bar{t}}\|_2^2
-\frac{\beta}{2}\big(\|\boldsymbol{w}^{*}-
\boldsymbol{w}^{\bar{t}}\|_2^2-
\nonumber\\
&\textstyle
\|\boldsymbol{w}^{*}-\boldsymbol{w}^{\bar{t}-1}\|_2^2\big)+
\sum_{t=1}^{\bar{t}-1}\Big(\frac{1}{2\beta}
\|\boldsymbol{\lambda}^{t-1}-\boldsymbol{\lambda}^{t}\|_2^2+
(\frac{\beta}{2\alpha^2}-\frac{\beta}{2})\|\boldsymbol{w}^{t}-
\nonumber\\
&\textstyle
\boldsymbol{w}^{t-1}\|_2^2
-\frac{\beta}{2}\big(\|\boldsymbol{w}^{*}-\boldsymbol{w}^{t}\|_2^2-
\|\boldsymbol{w}^{*}-\boldsymbol{w}^{t-1}\|_2^2\big)\Big).
\label{convergence-17}
\end{align}
Reorganizing the terms yields
\begin{align}
&\textstyle\mathbb{E}_{\{\boldsymbol{r}^{\bar{t}}\}}
\big[[(\ref{convergence-16})-1]\big]\leq
\frac{1}{2\beta}\|\boldsymbol{\lambda}^{0}-\boldsymbol{\lambda}\|_2^2
-\frac{\beta}{2}\big(\|\boldsymbol{w}^{\bar{t}}-
\boldsymbol{w}^{\bar{t}-1}\|_2^2-
\nonumber\\
&\textstyle\|\boldsymbol{w}^{*}-\boldsymbol{w}^{\bar{t}-1}\|_2^2\big)
+\sum_{t=1}^{\bar{t}-1}\Big((\frac{\beta}{2\alpha^2}-
\frac{\beta}{2})\|\boldsymbol{w}^{t}-\boldsymbol{w}^{t-1}\|_2^2-
\nonumber\\
&\textstyle\frac{\beta}{2}\big(\|\boldsymbol{w}^{*}-\boldsymbol{w}^{t}\|_2^2-
\|\boldsymbol{w}^{*}-\boldsymbol{w}^{t-1}\|_2^2\big)\Big)
-\frac{\alpha\bar{\alpha}\beta\cdot}{2}
\sum\limits_{t=1}^{\bar{t}}(\|\boldsymbol{w}^{t}-\boldsymbol{w}^*\|_2^2
\nonumber\\
&\textstyle
+\|\boldsymbol{w}^{t-1}-\boldsymbol{w}^{t}\|_2^2
-\|\boldsymbol{w}^{t-1}-\boldsymbol{w}^*\|_2^2)
\nonumber\\
&\textstyle\leq \frac{1}{2\beta}\|\boldsymbol{\lambda}^{0}-
\boldsymbol{\lambda}\|_2^2
+\sum_{t=1}^{\bar{t}-1}\big((\frac{\beta}{2\alpha^2}-
\frac{\beta}{2}-\frac{\alpha\bar{\alpha}\beta}{2})\cdot
\|\boldsymbol{w}^{t}-\boldsymbol{w}^{t-1}\|_2^2\big)
\nonumber\\
&\textstyle-\sum_{t=1}^{\bar{t}-1}\frac{\beta}{2}\big(
\|\boldsymbol{w}^{*}-\boldsymbol{w}^{t}\|_2^2-
\|\boldsymbol{w}^{*}-\boldsymbol{w}^{t-1}\|_2^2\big)
\nonumber\\
&\textstyle
-\frac{\alpha\bar{\alpha}\beta}{2}
\big(\sum_{t=1}^{\bar{t}}\|\boldsymbol{w}^{t}-\boldsymbol{w}^*\|_2^2
-\|\boldsymbol{w}^{t-1}-\boldsymbol{w}^*\|_2^2\big)
\nonumber\\
&\textstyle
-\frac{\beta}{2}\big(\|\boldsymbol{w}^{*}-\boldsymbol{w}^{\bar{t}}\|_2^2-
\|\boldsymbol{w}^{*}-\boldsymbol{w}^{\bar{t}-1}\|_2^2\big)
\nonumber\\
&\textstyle\leq \frac{1}{2\beta}\|\boldsymbol{\lambda}^{0}-
\boldsymbol{\lambda}\|_2^2+(\frac{\beta}{2}+
\frac{\alpha\bar{\alpha}\beta}{2})
\|\boldsymbol{w}^{*}-\boldsymbol{w}^{0}\|_2^2
\nonumber\\
&\textstyle
+\sum_{t=1}^{\bar{t}-1}\big((\frac{\beta}{2\alpha^2}-
\frac{\beta}{2}-\frac{\alpha\bar{\alpha}\beta}{2})\cdot
\|\boldsymbol{w}^{t}-\boldsymbol{w}^{t-1}\|_2^2\big)
\nonumber\\
&\textstyle\overset{(a)}{\leq}
\frac{1}{2\beta}\|\boldsymbol{\lambda}^{0}-\boldsymbol{\lambda}\|_2^2
+(\frac{\beta}{2}+\frac{\alpha\bar{\alpha}\beta}{2})
\|\boldsymbol{w}^{*}-\boldsymbol{w}^{0}\|_2^2,
\label{convergence-18}
\end{align}
where $(a)$ is because $\alpha\bar{\alpha}\overset{(\ref{theorem-1})}
{\geq} (\frac{1}{\alpha^2}-1)\Rightarrow(\frac{\beta}{2\alpha^2}-
\frac{\beta}{2})-\frac{\alpha\bar{\alpha}\beta}{2}\leq 0$.
Substituting (\ref{convergence-18}) into (\ref{convergence-16}) yields
\begin{align}
\textstyle 0\leq  \text{Const}+
\frac{1}{2\beta}\|\boldsymbol{\lambda}^{0}-\boldsymbol{\lambda}\|_2^2
+\mathbb{E}_{\{\boldsymbol{r}^{\bar{t}}\}}
\big[[(\ref{convergence-16})-2]+\alpha[(\ref{convergence-16})-3]\big],
\label{convergence-20}
\end{align}
where Const has absorbed the constant term $(\frac{\beta}{2}+
\frac{\alpha\bar{\alpha}\beta}{2})\|\boldsymbol{w}^{*}-
\boldsymbol{w}^{0}\|_2^2$. Regarding $(\boldsymbol{w}^{*}-
\boldsymbol{w}^{t})^T\nabla f(\boldsymbol{w}^{t})$ in
$[(\ref{convergence-16})-2]$ and $[(\ref{convergence-16})-3]$,
using the convexity of $f$ we have
\begin{align}
(\boldsymbol{w}^{*}-\boldsymbol{w}^{t})^T\nabla f(\boldsymbol{w}^{t})
\leq f(\boldsymbol{w}^{*})-f(\boldsymbol{w}^{t}).
\label{convergence-21}
\end{align}
Substituting (\ref{convergence-21}) into (\ref{convergence-20})
yields
\begin{align}
&\textstyle 0\leq  \text{Const}+
\frac{1}{2\beta}\|\boldsymbol{\lambda}^{0}-\boldsymbol{\lambda}\|_2^2+
\mathbb{E}_{\{\boldsymbol{r}^{\bar{t}}\}}
\big[\Upsilon^{\bar{t}}+f(\boldsymbol{w}^{*})-f(\boldsymbol{w}^{\bar{t}})
\big]
\nonumber\\
&\textstyle+\alpha\mathbb{E}_{\{\boldsymbol{r}^{\bar{t}}\}}
[\textstyle\sum_{t=1}^{\bar{t}-1}\Upsilon^{t}
+f(\boldsymbol{w}^{*})-f(\boldsymbol{w}^{t})],
\label{convergence-22}
\end{align}
where
\begin{align}
\Upsilon^{\bar{t}}\triangleq&\textstyle(\tilde{\boldsymbol{v}}^*-
\tilde{\boldsymbol{v}}^{\bar{t}})^T
\boldsymbol{A}^T\bar{\boldsymbol{\lambda}}^{\bar{t}}+
(\bar{\boldsymbol{\lambda}}^{\bar{t}}-\boldsymbol{\lambda})^T
\big(\sum_{j=1}^K \boldsymbol{A}_j\tilde{\boldsymbol{v}}_j^{\bar{t}}
+\boldsymbol{b}-\boldsymbol{w}^{\bar{t}}\big)
\nonumber\\
&-(\boldsymbol{w}^{*}-\boldsymbol{w}^{\bar{t}})^T
\bar{\boldsymbol{\lambda}}^{\bar{t}}
\nonumber\\
\Upsilon^{t}\triangleq&\textstyle M^{t}+(\boldsymbol{\lambda}^{t}-
\boldsymbol{\lambda})^T
\big(\sum_{j=1}^K\boldsymbol{A}_j\tilde{\boldsymbol{v}}_j^{t}
+\boldsymbol{b}-\boldsymbol{w}^{t}\big)
\nonumber\\
&-(\boldsymbol{w}^{*}-\boldsymbol{w}^{t})^T\boldsymbol{\lambda}^{t}.
\nonumber
\end{align}
Regarding $\Upsilon^{\bar{t}}$, eliminating those replicative terms
we have
\begin{align}
&\textstyle \Upsilon^{\bar{t}}
=(\bar{\boldsymbol{\lambda}}^{\bar{t}})^T
(\boldsymbol{A}\tilde{\boldsymbol{v}}^*
+\boldsymbol{b}-\boldsymbol{w}^{*})-
\boldsymbol{\lambda}^T(\boldsymbol{A}\tilde{\boldsymbol{v}}^{\bar{t}}
+\boldsymbol{b}-\boldsymbol{w}^{\bar{t}})
\nonumber\\
&\textstyle
\overset{\boldsymbol{A}\tilde{\boldsymbol{v}}^*
+\boldsymbol{b}-\boldsymbol{w}^{*}=\boldsymbol{0}}{=}
-\boldsymbol{\lambda}^T(\boldsymbol{A}\tilde{\boldsymbol{v}}^{\bar{t}}
+\boldsymbol{b}-\boldsymbol{w}^{\bar{t}}).
\label{convergence-23}
\end{align}
Analogously, for $\Upsilon^{t}$, $1\leq t\leq \bar{t}-1$, we have
\begin{align}
&\textstyle \Upsilon^{t}=-\boldsymbol{\lambda}^T
(\boldsymbol{A}\tilde{\boldsymbol{v}}^{t}
+\boldsymbol{b}-\boldsymbol{w}^{t}).
\label{convergence-24}
\end{align}
Substituting (\ref{convergence-23}) and (\ref{convergence-24})
into (\ref{convergence-22})
\begin{align}
&\textstyle 0\leq  \underbrace{\textstyle\text{Const}+
\frac{1}{2\beta}\|\boldsymbol{\lambda}^{0}-
\boldsymbol{\lambda}\|_2^2}_{[(\ref{convergence-26})-1]}
+\mathbb{E}_{\{\boldsymbol{r}^{\bar{t}}\}}
\big[f(\boldsymbol{w}^{*})-f(\boldsymbol{w}^{\bar{t}})-
\nonumber\\
&\textstyle
\quad\boldsymbol{\lambda}^T(\boldsymbol{A}\tilde{\boldsymbol{v}}^{\bar{t}}
+\boldsymbol{b}-\boldsymbol{w}^{\bar{t}})\big]
+\alpha\mathbb{E}_{\{\boldsymbol{r}^{\bar{t}}\}}
\big[\textstyle\sum\limits_{t=1}^{\bar{t}-1}
f(\boldsymbol{w}^{*})-f(\boldsymbol{w}^{t})-
\nonumber\\
&\textstyle
\quad\boldsymbol{\lambda}^T
(\boldsymbol{A}\tilde{\boldsymbol{v}}^{t}
+\boldsymbol{b}-\boldsymbol{w}^{t})\big]
\nonumber\\
&\textstyle\leq[(\ref{convergence-26})-1]-
\mathbb{E}_{\{\boldsymbol{r}^{\bar{t}}\}}
\Big[\big(f(\boldsymbol{w}^{\bar{t}})+
\alpha\sum_{t=1}^{\bar{t}-1}f(\boldsymbol{w}^{t})\big)-
\nonumber\\
&\textstyle
\big(1+\alpha\cdot(\bar{t}-1)\big)f(\boldsymbol{w}^{*})
+\boldsymbol{\lambda}^T(\boldsymbol{A}\tilde{\boldsymbol{v}}^{\bar{t}}
+\boldsymbol{b}-\boldsymbol{w}^{\bar{t}})
\nonumber\\
&\textstyle
+\alpha\sum_{t=1}^{\bar{t}-1}
\boldsymbol{\lambda}^T(\boldsymbol{A}\tilde{\boldsymbol{v}}^{t}
+\boldsymbol{b}-\boldsymbol{w}^{t})\Big]
\nonumber\\
&\textstyle \overset{(a)}{\leq} [(\ref{convergence-26})-1]
-\mathbb{E}_{\{\boldsymbol{r}^{\bar{t}}\}}
\big[\big(1+\alpha\cdot(\bar{t}-1)\big)\big( f(\vec{\boldsymbol{w}}^{\bar{t}})
-f(\boldsymbol{w}^{*})+
\nonumber\\
&\textstyle\qquad\qquad\qquad\qquad\qquad\boldsymbol{\lambda}^T
(\boldsymbol{A}\vec{\boldsymbol{v}}^{\bar{t}}
+\boldsymbol{b}-\vec{\boldsymbol{w}}^{\bar{t}})\big)\big],
\label{convergence-26}
\end{align}
where
\begin{align}
\textstyle\vec{\boldsymbol{w}}^{\bar{t}}\triangleq\frac{\boldsymbol{w}^{\bar{t}}+
\alpha\sum_{t=1}^{\bar{t}-1}\boldsymbol{w}^{t}}{1+\alpha\cdot(\bar{t}-1)}, \
\vec{\boldsymbol{v}}^{\bar{t}}\triangleq\frac{\tilde{\boldsymbol{v}}^{\bar{t}}+
\alpha\sum_{t=1}^{\bar{t}-1}\tilde{\boldsymbol{v}}^{t}}{1+\alpha\cdot(\bar{t}-1)},
\end{align}
and $(a)$ comes from Jensen's inequality. From (\ref{convergence-26}) we see
\begin{align}
&\textstyle \mathbb{E}_{\{\boldsymbol{r}^{\bar{t}}\}}\big[
f(\vec{\boldsymbol{w}}^{\bar{t}})-f(\boldsymbol{w}^{*})+
\boldsymbol{\lambda}^T(\boldsymbol{A}\vec{\boldsymbol{v}}^{\bar{t}}
+\boldsymbol{b}-\vec{\boldsymbol{w}}^{\bar{t}})\big]
\nonumber\\
&\textstyle\leq \frac{\text{Const}+\frac{1}{2\beta}\|\boldsymbol{\lambda}^{0}-
\boldsymbol{\lambda}\|_2^2}{1+\alpha\cdot(\bar{t}-1)}.
\label{convergence-28}
\end{align}
Let $\boldsymbol{\lambda}$ be chosen as $\boldsymbol{\lambda}=
\vec{\boldsymbol{\lambda}}=2(\|\boldsymbol{\lambda}^*\|_2+\epsilon)\cdot
\frac{\boldsymbol{A}\vec{\boldsymbol{v}}^{\bar{t}}+\boldsymbol{b}-
\vec{\boldsymbol{w}}^{\bar{t}}}{\|\boldsymbol{A}\vec{\boldsymbol{v}}^{\bar{t}}
+\boldsymbol{b}-\vec{\boldsymbol{w}}^{\bar{t}}\|_2}$.
Then (\ref{convergence-28}) becomes
\begin{align}
&\textstyle \mathbb{E}_{\{\boldsymbol{r}^{\bar{t}}\}}\big[
\underbrace{f(\vec{\boldsymbol{w}}^{\bar{t}})-f(\boldsymbol{w}^{*})+
2(\|\boldsymbol{\lambda}^*\|_2+\epsilon)\cdot
\|\boldsymbol{A}\vec{\boldsymbol{v}}^{\bar{t}}
+\boldsymbol{b}-\vec{\boldsymbol{w}}^{\bar{t}}\|_2}_{[(\ref{convergence-29})-1]}\big]
\nonumber\\
&\textstyle
\leq \frac{\text{Const}+\frac{1}{2\beta}\|\boldsymbol{\lambda}^{0}-
\vec{\boldsymbol{\lambda}}\|_2^2}{1+\alpha\cdot(\bar{t}-1)}.
\label{convergence-29}
\end{align}
According to Theorem \ref{theorem-content-3} presented in
Appendix \ref{appendix-D}, we have
\begin{align}
&\textstyle 0\leq f(\vec{\boldsymbol{w}}^{\bar{t}})-f(\boldsymbol{w}^*)+
\langle\boldsymbol{\lambda}^*,\boldsymbol{A}\vec{\boldsymbol{v}}^{\bar{t}}
+\boldsymbol{b}-\vec{\boldsymbol{w}}^{\bar{t}}\rangle
\nonumber\\
&\textstyle
\leq \underbrace{f(\vec{\boldsymbol{w}}^{\bar{t}})-f(\boldsymbol{w}^*)+
\|\boldsymbol{\lambda}^*\|_2\|\boldsymbol{A}\vec{\boldsymbol{v}}^{\bar{t}}
+\boldsymbol{b}-\vec{\boldsymbol{w}}^{\bar{t}}\|_2}_{[(\ref{convergence-30})-1]}.
\label{convergence-30}
\end{align}
Using (\ref{convergence-30}), i.e., $[(\ref{convergence-30})-1]\geq 0$,
we have
\begin{align}
&[(\ref{convergence-29})-1] \geq \pm (f(\vec{\boldsymbol{w}}^{\bar{t}})-
f(\boldsymbol{w}^{*}) )
\nonumber\\
\Rightarrow& |f(\vec{\boldsymbol{w}}^{\bar{t}})-f(\boldsymbol{w}^{*})|
\leq [(\ref{convergence-29})-1];
\label{convergence-31}
\\
&[(\ref{convergence-29})-1] \geq \pm C\cdot \|\boldsymbol{A}\vec{\boldsymbol{v}}^{\bar{t}}
+\boldsymbol{b}-\vec{\boldsymbol{w}}^{\bar{t}}\|_2
\nonumber\\
\textstyle \Rightarrow & C\cdot \|\boldsymbol{A}\vec{\boldsymbol{v}}^{\bar{t}}
+\boldsymbol{b}-\vec{\boldsymbol{w}}^{\bar{t}}\|_2\leq [(\ref{convergence-29})-1],
\label{convergence-32}
\end{align}
where $C=\|\boldsymbol{\lambda}^*\|_2+\epsilon$. Substituting
(\ref{convergence-31}) and (\ref{convergence-32}) into
(\ref{convergence-29}) yields
\begin{align}
&\textstyle \mathbb{E}_{\{\boldsymbol{r}^{\bar{t}}\}}\big[
|f(\vec{\boldsymbol{w}}^{\bar{t}})-f(\boldsymbol{w}^{*})|\big]
\leq \frac{\text{Const}+\frac{1}{2\beta}\|\boldsymbol{\lambda}^{0}-
\vec{\boldsymbol{\lambda}}\|_2^2}{1+\alpha\cdot(\bar{t}-1)},
\nonumber\\
&\textstyle \mathbb{E}_{\{\boldsymbol{r}^{\bar{t}}\}}\big[
\|\boldsymbol{A}\vec{\boldsymbol{v}}^{\bar{t}}
+\boldsymbol{b}-\vec{\boldsymbol{w}}^{\bar{t}}\|_2\big]
\leq \frac{\text{Const}+\frac{1}{2\beta}\|\boldsymbol{\lambda}^{0}-
\vec{\boldsymbol{\lambda}}\|_2^2}{C(1+\alpha\cdot(\bar{t}-1))},
\label{convergence-33}
\end{align}
which is the desired results.

\section{Proving (\ref{convergence-17})}
\label{appendix-C}
Recall the famous parallelogram equation, i.e.,
\begin{align}
2\langle\boldsymbol{x}-\boldsymbol{y},\boldsymbol{x}-\boldsymbol{z}\rangle
=\|\boldsymbol{x}-\boldsymbol{y}\|_2^2+\|\boldsymbol{x}-\boldsymbol{z}\|_2^2
-\|\boldsymbol{y}-\boldsymbol{z}\|_2^2
\label{appendix-d-2}
\end{align}
Applying (\ref{appendix-d-2}) to the first two terms
$[(\ref{convergence-16})-1]$ yields
\begin{align}
&\textstyle (\bar{\boldsymbol{\lambda}}^{\bar{t}}-\boldsymbol{\lambda})^T
(\boldsymbol{\lambda}^{\bar{t}-1}-\bar{\boldsymbol{\lambda}}^{\bar{t}})
\nonumber\\
&\textstyle
=-\frac{1}{2}(\|\bar{\boldsymbol{\lambda}}^{\bar{t}}-\boldsymbol{\lambda}\|_2^2
+\|\boldsymbol{\lambda}^{\bar{t}-1}-\bar{\boldsymbol{\lambda}}^{\bar{t}}\|_2^2
-\|\boldsymbol{\lambda}^{\bar{t}-1}-\boldsymbol{\lambda}\|_2^2),
\label{appendix-d-1}
\end{align}
\begin{align}
&\textstyle (\boldsymbol{\lambda}^{t}-\boldsymbol{\lambda})^T
(\boldsymbol{\lambda}^{t-1}-\boldsymbol{\lambda}^{t})
\nonumber\\
&\textstyle
=-\frac{1}{2}(\|\boldsymbol{\lambda}^{t}-\boldsymbol{\lambda}\|_2^2
+\|\boldsymbol{\lambda}^{t-1}-\boldsymbol{\lambda}^{t}\|_2^2
-\|\boldsymbol{\lambda}^{t-1}-\boldsymbol{\lambda}\|_2^2),
\nonumber\\
&\quad 1\leq t\leq \bar{t}-1.
\label{appendix-d-1-1}
\end{align}
Similarly, for $\Xi(\boldsymbol{w}^{t})$, $1\leq t\leq \bar{t}$, we
also have
\begin{align}
&\textstyle (\boldsymbol{w}^{*}-\boldsymbol{w}^{t})^T
(\boldsymbol{w}^{t}-\boldsymbol{w}^{t-1})=
-\frac{1}{2}(\|\boldsymbol{w}^{t}-\boldsymbol{w}^*\|_2^2
+\|\boldsymbol{w}^{t-1}-\boldsymbol{w}^{t}\|_2^2
\nonumber\\
&\textstyle\qquad\qquad\qquad\qquad\qquad
-\|\boldsymbol{w}^{t-1}-\boldsymbol{w}^*\|_2^2).
\label{appendix-d-2-1}
\end{align}
Regarding the term $Y^{\bar{t}}$ we have
\begin{align}
&\textstyle Y^{\bar{t}}=\beta(\boldsymbol{w}^{\bar{t}}-\boldsymbol{w}^{\bar{t}-1})^T
\sum_{j=1}^K\boldsymbol{A}_j(\tilde{\boldsymbol{v}}_j^{*}-
\tilde{\boldsymbol{v}}_j^{\bar{t}})
\nonumber\\
&\textstyle\overset{(a)}{=}\beta(\boldsymbol{w}^{\bar{t}}-\boldsymbol{w}^{\bar{t}-1})^T
\big((\boldsymbol{w}^{*}-\boldsymbol{w}^{\bar{t}})+\frac{1}{\beta}
(\boldsymbol{\lambda}^{\bar{t}-1}-\bar{\boldsymbol{\lambda}}^{\bar{t}})\big)
\nonumber\\
&\textstyle\overset{(\ref{appendix-d-2})}{=}
(\boldsymbol{w}^{\bar{t}}-\boldsymbol{w}^{\bar{t}-1})^T
(\boldsymbol{\lambda}^{\bar{t}-1}-\bar{\boldsymbol{\lambda}}^{\bar{t}})
\nonumber\\
&\textstyle\qquad-\frac{\beta}{2}\big(\|\boldsymbol{w}^{\bar{t}}-\boldsymbol{w}^{\bar{t}-1}\|_2^2
+\|\boldsymbol{w}^{*}-\boldsymbol{w}^{\bar{t}}\|_2^2-
\|\boldsymbol{w}^{*}-\boldsymbol{w}^{\bar{t}-1}\|_2^2\big)
\nonumber\\
&\textstyle\overset{(b)}{\leq}
\frac{1}{2\beta}\|\boldsymbol{\lambda}^{\bar{t}-1}-
\bar{\boldsymbol{\lambda}}^{\bar{t}}\|_2^2
-\frac{\beta}{2}\big(\|\boldsymbol{w}^{*}-\boldsymbol{w}^{\bar{t}}\|_2^2-
\|\boldsymbol{w}^{*}-\boldsymbol{w}^{\bar{t}-1}\|_2^2\big),
\label{appendix-d-3}
\end{align}
where $(a)$ is deduced by
\begin{align}
&\textstyle\frac{1}{\beta}(\bar{\boldsymbol{\lambda}}^{\bar{t}}-
\boldsymbol{\lambda}^{\bar{t}-1})\overset{(\ref{sec2-2-1})}{=}
\boldsymbol{A}\tilde{\boldsymbol{v}}^{\bar{t}}
+\boldsymbol{b}-\boldsymbol{w}^{\bar{t}}
-(\boldsymbol{A}\tilde{\boldsymbol{v}}^{*}
+\boldsymbol{b}-\boldsymbol{w}^{*})
\nonumber\\
&\qquad\qquad\qquad\overset{\boldsymbol{A}\tilde{\boldsymbol{v}}^{*}
+\boldsymbol{b}-\boldsymbol{w}^{*}=\boldsymbol{0}}{=}
\textstyle\boldsymbol{A}(\tilde{\boldsymbol{v}}^{\bar{t}}
-\tilde{\boldsymbol{v}}^{*})-(\boldsymbol{w}^{\bar{t}}-\boldsymbol{w}^{*})
\nonumber\\
&\textstyle\Rightarrow\boldsymbol{A}(\tilde{\boldsymbol{v}}^{*}-
\tilde{\boldsymbol{v}}_j^{\bar{t}})=(\boldsymbol{w}^{*}-
\boldsymbol{w}^{\bar{t}})+
\frac{1}{\beta}(\boldsymbol{\lambda}^{\bar{t}-1}-
\bar{\boldsymbol{\lambda}}^{\bar{t}}),
\label{appendix-d-4}
\end{align}
and $(b)$ is obtained by applying the well-known inequality
$ab\leq \frac{\beta}{2}a^2+\frac{1}{2\beta}b^2$. As for $Y^{t}$,
$1\leq t\leq \bar{t}-1$, similar to the above deduction
we have
\begin{align}
&\textstyle Y^{t}\triangleq\beta(\boldsymbol{w}^{t}-\boldsymbol{w}^{t-1})^T
\sum_{j=1}^K\boldsymbol{A}_j(\tilde{\boldsymbol{v}}_j^{*}-
\tilde{\boldsymbol{v}}_j^{t})
\nonumber\\
&\textstyle
\overset{(a)}{=}\beta(\boldsymbol{w}^{t}-\boldsymbol{w}^{t-1})^T
\big((\boldsymbol{w}^{*}-\boldsymbol{w}^{t})+\frac{1}{\alpha\beta}
(\boldsymbol{\lambda}^{t-1}-\boldsymbol{\lambda}^{t})\big)
\nonumber\\
&\textstyle =\frac{1}{\alpha}(\boldsymbol{w}^{t}-\boldsymbol{w}^{t-1})^T
(\boldsymbol{\lambda}^{t-1}-\boldsymbol{\lambda}^{t})-
\nonumber\\
&\textstyle
\qquad\frac{\beta}{2}\big(\|\boldsymbol{w}^{t}-\boldsymbol{w}^{t-1}\|_2^2
+\|\boldsymbol{w}^{*}-\boldsymbol{w}^{t}\|_2^2-
\|\boldsymbol{w}^{*}-\boldsymbol{w}^{t-1}\|_2^2\big)
\nonumber\\
&\textstyle\overset{(b)}{\leq}
\frac{1}{2\beta}\|\boldsymbol{\lambda}^{t-1}-\boldsymbol{\lambda}^{t}\|_2^2+
(\frac{\beta}{2\alpha^2}-
\frac{\beta}{2})\|\boldsymbol{w}^{t}-\boldsymbol{w}^{t-1}\|_2^2-
\nonumber\\
&\textstyle
\qquad
\frac{\beta}{2}\big(\|\boldsymbol{w}^{*}-\boldsymbol{w}^{t}\|_2^2-
\|\boldsymbol{w}^{*}-\boldsymbol{w}^{t-1}\|_2^2\big),
\label{appendix-d-5}
\end{align}
where $(a)$ is because $\sum_{j=1}^K \boldsymbol{A}_j(\tilde{\boldsymbol{v}}_j^{*}-
\tilde{\boldsymbol{v}}_j^{t})=(\boldsymbol{w}^{*}-\boldsymbol{w}^{t})
+\frac{1}{\alpha\beta}(\boldsymbol{\lambda}^{t-1}-\boldsymbol{\lambda}^{t})$,
and $(b)$ has invoked again $\frac{1}{\alpha}\cdot
ab\leq \frac{\beta}{2\alpha^2}a^2+\frac{1}{2\beta}b^2$. Substituting
(\ref{appendix-d-1}), (\ref{appendix-d-1-1}), (\ref{appendix-d-2-1}),
(\ref{appendix-d-3}) and (\ref{appendix-d-5}) into
$[(\ref{convergence-16})-1]$ yields the desired result.

\section{Theorem \ref{theorem-content-3}}
\label{appendix-D}
\newtheorem{theorem2}{Theorem}
\begin{theorem}[\cite{HeYuan12}]
\label{theorem-content-3}
Consider the following linearly constrained problem:
\begin{align}
\mathop {\min}\limits_{\boldsymbol{x},\boldsymbol{y}} & \
\textstyle f(\boldsymbol{y})
\nonumber\\
\text{s.t.} & \ \textstyle \boldsymbol{A}\boldsymbol{x}
+\boldsymbol{y}=\boldsymbol{b}.
\label{theorem-HeYuan12}
\end{align}
Assume that $f(\boldsymbol{y})$ is convex and continuously differentiable.
Let $\{\boldsymbol{x}^*,\boldsymbol{y}^*,\boldsymbol{\lambda}^*\}$
denote a set of Lagrangian primal-dual solution to (\ref{problem-linear-cons2}).
Then it holds
\begin{align}
\textstyle f(\boldsymbol{y})-f(\boldsymbol{y}^*)+\langle\boldsymbol{\lambda}^*,
\boldsymbol{A}\boldsymbol{x}+\boldsymbol{y}-\boldsymbol{b}\rangle\geq 0,
\ \forall \boldsymbol{x}, \  \boldsymbol{y}.
\label{appendix-e-1}
\end{align}
\end{theorem}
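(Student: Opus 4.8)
The plan is to exploit the fact that a primal-dual optimal triple $\{\boldsymbol{x}^*,\boldsymbol{y}^*,\boldsymbol{\lambda}^*\}$ of (\ref{theorem-HeYuan12}) is precisely a saddle point of the associated Lagrangian
\[
L(\boldsymbol{x},\boldsymbol{y},\boldsymbol{\lambda})=f(\boldsymbol{y})+\langle\boldsymbol{\lambda},\boldsymbol{A}\boldsymbol{x}+\boldsymbol{y}-\boldsymbol{b}\rangle.
\]
Because the constraint $\boldsymbol{A}\boldsymbol{x}+\boldsymbol{y}=\boldsymbol{b}$ is affine and $f$ is convex, strong duality holds and the KKT conditions are both necessary and sufficient; in particular, $(\boldsymbol{x}^*,\boldsymbol{y}^*)$ globally minimizes $L(\cdot,\cdot,\boldsymbol{\lambda}^*)$ over all $\boldsymbol{x},\boldsymbol{y}$. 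This yields the right-hand saddle inequality $L(\boldsymbol{x}^*,\boldsymbol{y}^*,\boldsymbol{\lambda}^*)\le L(\boldsymbol{x},\boldsymbol{y},\boldsymbol{\lambda}^*)$ for all $\boldsymbol{x},\boldsymbol{y}$.

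First I would invoke primal feasibility $\boldsymbol{A}\boldsymbol{x}^*+\boldsymbol{y}^*-\boldsymbol{b}=\boldsymbol{0}$, which makes the coupling term in $L(\boldsymbol{x}^*,\boldsymbol{y}^*,\boldsymbol{\lambda}^*)$ vanish, so that $L(\boldsymbol{x}^*,\boldsymbol{y}^*,\boldsymbol{\lambda}^*)=f(\boldsymbol{y}^*)$. Substituting this value into the saddle inequality and rearranging produces exactly (\ref{appendix-e-1}), namely $f(\boldsymbol{y})-f(\boldsymbol{y}^*)+\langle\boldsymbol{\lambda}^*,\boldsymbol{A}\boldsymbol{x}+\boldsymbol{y}-\boldsymbol{b}\rangle\ge 0$ for every $\boldsymbol{x},\boldsymbol{y}$, with no differentiability of $f$ even being required.

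As a fully self-contained alternative I would argue directly from first-order optimality. Stationarity of $L$ in $\boldsymbol{x}$ forces $\boldsymbol{A}^T\boldsymbol{\lambda}^*=\boldsymbol{0}$, and stationarity in $\boldsymbol{y}$ forces $\nabla f(\boldsymbol{y}^*)=-\boldsymbol{\lambda}^*$. Using $\boldsymbol{A}^T\boldsymbol{\lambda}^*=\boldsymbol{0}$ I eliminate the $\boldsymbol{A}\boldsymbol{x}$ term, since $\langle\boldsymbol{\lambda}^*,\boldsymbol{A}\boldsymbol{x}\rangle=\langle\boldsymbol{A}^T\boldsymbol{\lambda}^*,\boldsymbol{x}\rangle=0$, and with $\boldsymbol{b}=\boldsymbol{A}\boldsymbol{x}^*+\boldsymbol{y}^*$ I rewrite the left-hand side of (\ref{appendix-e-1}) as $f(\boldsymbol{y})-f(\boldsymbol{y}^*)+\langle\boldsymbol{\lambda}^*,\boldsymbol{y}-\boldsymbol{y}^*\rangle$. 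Replacing $\boldsymbol{\lambda}^*$ with $-\nabla f(\boldsymbol{y}^*)$ turns this into $f(\boldsymbol{y})-f(\boldsymbol{y}^*)-\langle\nabla f(\boldsymbol{y}^*),\boldsymbol{y}-\boldsymbol{y}^*\rangle$, which is nonnegative by the gradient (first-order) characterization of convexity of $f$.

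The computations are entirely routine, so I do not expect any genuine difficulty in the algebra. The only point that warrants care is the justification that a primal-dual optimal solution of a linearly constrained convex program indeed constitutes a Lagrangian saddle point (equivalently, that strong duality and the KKT stationarity relations hold). This is guaranteed here precisely because the constraint is affine, so that no additional constraint qualification beyond feasibility of the problem is needed; this is also the content of the cited result \cite{HeYuan12}.
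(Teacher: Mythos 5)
Your proof is correct. The paper itself gives no proof of this theorem --- it is imported verbatim from the cited reference \cite{HeYuan12} --- so there is no in-paper argument to compare against; your saddle-point derivation (primal feasibility kills the coupling term at the optimum, and the inequality $L(\boldsymbol{x}^*,\boldsymbol{y}^*,\boldsymbol{\lambda}^*)\le L(\boldsymbol{x},\boldsymbol{y},\boldsymbol{\lambda}^*)$ does the rest) is exactly the standard argument underlying the cited result, and your KKT-based alternative via $\boldsymbol{A}^T\boldsymbol{\lambda}^*=\boldsymbol{0}$, $\nabla f(\boldsymbol{y}^*)=-\boldsymbol{\lambda}^*$, and the first-order convexity inequality is an equally valid, self-contained route. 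The one caveat worth keeping explicit, which you already flag, is that the existence of a multiplier $\boldsymbol{\lambda}^*$ making $(\boldsymbol{x}^*,\boldsymbol{y}^*,\boldsymbol{\lambda}^*)$ a saddle point is part of the hypothesis (``a set of Lagrangian primal-dual solution''), so nothing further needs to be verified.
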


\bibliography{newbib}
\bibliographystyle{IEEEtran}
\end{document}